\newtheorem{theorem}{Theorem}[section]
\newtheorem{lemma}{Lemma}[theorem]
\theoremstyle{definition}
\newtheorem{definition}{Definition}[section]
\newtheorem{example}[definition]{Example}
\theoremstyle{remark}
\renewcommand{\P}[1]{\mathbb{P}\left[#1\right]}
\newcommand{\E}[1]{\mathbb{E}\left[#1\right]}
\newenvironment{mech}[1]{\begin{tcolorbox}[colback=red!5!white,colframe=red!75!black,title={#1}]}{\end{tcolorbox}}
\title{Random Serial Dictatorship with Transfers}
\author{Sudharsan Sundar, Eric Gao, Trevor Chow, Matthew Ding \thanks{Stanford University.}}
\date{March 25, 2023}
\begin{document}

\maketitle

\begin{abstract}
    It is well known that Random Serial Dictatorship is strategy-proof and leads to a Pareto-Efficient outcome. We show that this result breaks down when individuals are allowed to make transfers, and adapt Random Serial Dictatorship to encompass trades between individuals. Strategic analysis of play under the new mechanisms we define is given, accompanied by simulations to quantify the gains from trade.
\end{abstract}


\tableofcontents
\newpage

\section{Motivation}

The question of how to assign unique, indivisible goods is a problem that is common and important in many economic contexts. One need look no further than a college campus to see this: the classic example is room assignments to students planning to live in on-campus dorms.

Most current approaches to this question, especially for dorm assignments, tend to follow a variant of Random Serial Dictatorship, a classic and useful mechanism in the market design literature. But, as any observant undergrad could tell you, aftermarkets, albeit informal and small in size, pop up in the wake of RSD. People willing to sell their rooms for some extra cash and those willing (and able) to buy for a high enough price conduct backroom (or, rather, `your-room, my-cash') deals in order to trade housing assignments.

This reveals an important consideration that has not been thoroughly appreciated when using RSD as an assignment mechanism: participants' welfare is heavily influenced by not only their room assignment but also the amount of money they walk away from the mechanism with. Thus, most analyses of assignment problems and RSD solutions miss the important gains from trade that could be realized if participants could easily and readily trade money for room assignments.

Our work hopes to investigate this area commonly missed in thinking about assignment mechanisms like RSD. The question we investigate is: what are the important properties of mechanisms that combine RSD with periods during which participants can engage in trades and transfers with one another, and how does this new type of mechanism compare to plain RSD?

\section{Related Works}

The allocation of indivisible goods such as houses is a well-studied problem in mechanism design, and one canonical approach is the randomised serial dictatorship (RSD) mechanism. The literature on RSD discusses a range of extensions and considerations that are not covered in the original mechanism. 

For example, one important aspect of problems like house allocation in real life is that they are often subject to endogenous information acquisition. This means that agents may have different information about the preferences of other agents, and this information may change over time. This is a key feature of many real-world problems, and it is important to consider how it affects the design of mechanisms.

\cite{bade_2015_serial} argues that while there are many optimal mechanisms for matching problems under perfect information, if we consider the process of endogenous information acquisition, then the unique ex ante Pareto-optimal, strategy-proof and non-bossy allocation mechanism is serial dictatorship. \cite{noda_2022_strategic} extends this analysis of RSD in an endogenous information acquisition setting, and shows that full disclosure about the choice sets of agents may not be entirely efficient, implying that there are positive externalities which may need to be considered in real life.

Others have considered how RSD compares with alternative randomised mechanisms. \cite{abdulkadiroglu_1998_random} demonstrate that the core from using a random endowment matching is equivalent to the results of RSD, and thus represent the same underlying lottery mechanism. \cite{bade_2020_random} extends this equivalence to the set of all Pareto-optimal, strategy-proof, and nonbossy matching mechanisms.

One further angle of research involves indivisible objects, and \cite{han_2016_on} generalises RSD to these objects by allowing for fractional distributions, and demonstrates how it is limited when there are more than 4 agents.

Finally, there is a lot of work on the ordinal efficiency of RSD, with \cite{manea_2009_asymptotic} showing that the probability that this is the case goes to 0 as the number of objects being matched increases, and in fact, \cite{manea_2007_random} finds that there are exchange contracts which dominate the RSD allocation.

However, one lacuna in the literature is incorporating monetary transfers after RSD. This is promising because \cite{klaus_2019_serial} highlight some normative distinctions when considering serial dictatorship with reservation prices and because \cite{hosseini_2015_on} demonstrate that outside of single-shot settings, RSD is manipulable.

Thus we now set out to explore the results of having monetary transfers with RSD. All omitted proofs can be found in Appendix B.

\section{Model and Baseline}

Our model consists of the following:

\begin{enumerate}
    \item A (finite) set of items $N$ indexed by $N = \{1,2,...,n\}$ with generic element $i \in N$;
    \item A (finite) set of agents $M$ indexed by $M = \{1,2,...,m\}$ with generic element $j \in M$;
    \item Each agent $j$ has quasilinear utility $u_j: N \cup \varnothing \times \mathbb{R} \to \mathbb{R}$ defined by $u_j(i,d) = v_j(i) + d$ where $v_j: N \cup \varnothing \to \mathbb{R}$ and $d$ can be intuitively thought of as ``dollars'';
    \item Each agent $j$ starts with $d_j$ dollars;
    \item An allocation $a$ is an injective (no two agents can match with the same item) function $M \to N \cup \varnothing$;
    \item A profile of transfers $t: M \to \mathbb{R}$ such that $\sum_{j \in M} t(j) = 0$ (money cannot appear or disappear). 
\end{enumerate}

Going forward, we can normalize $v_j(\varnothing) = 0$ for all $j$. Since each match and transfer profile induces some outcome, we can say that agents have preferences over matches and transfer profiles. Letting $A$ denote the set of all allocations and $T$ denote the set of all transfers, we can define $u_j': A \times T \to \mathbb{R}$ by
$$u_j'(a,t) = u_j(a(j), t(j)) = v_j(a(j)) + d_j + t(j).$$
Furthermore, we can (without much loss of generality) restrict to the case where the number of items is equal to the number of agents. In the motivating application of the Stanford housing market, if there are more agents (students) than items (houses), then there are larger institutional issues at hand than the question of allocation. However, if there are more items than agents, we can without loss ``remove'' any items that are not picked by independence of irrelavent alternatives. 

Our baseline algorithm is Serial Dictatorship, which does not include any transactions. 

\begin{mech}{(Random) Serial Dictatorship}
    Given a set of items $N$, a set of agents $M$, and each agent (privately) has preferences represented by $u_j$, do the following:
    \begin{enumerate}
        \setlength\itemsep{0.5em}
        \item Order the agents $M$ based on some arbitrary order;
        \item For each agent $j$ in $M$:
        \begin{enumerate}
            \item Ask agent $j$ to choose $i_j \in I$ maximizing their utility: $i_j \in \arg\max u_j(i,d_j)$;
            \item Set $a(j) = i_j$;
            \item Remove $i_j$ from $I$.
        \end{enumerate}
        \item Repeat until there are no agents or items left and set $a(j) = \varnothing$ for any remaining agents;
        \item Set $t(j) = 0$ for all agents.
    \end{enumerate}
    \vspace{0.5em}
    If the order in step one is by random, we call the mechanism Random Serial Dictatorship (RSD).
\end{mech}

Each run of RSD induces some allocation $a$ and the trivial transfer profile $t(j) = 0$. It is well known that in the absence of transfers, RSD leads to a Pareto optimal allocation: 

\begin{theorem}[Optimality of RSD]
    Suppose instead of quasilinear utilities, agents only care about the item they receive and $u_j(i,d) = v_j(i)$ for all $j$. Then, RSD always leads to a pareto efficient allocation $a$. That is, for all $a' \neq a$, there exists some $j$ such that
    $$v_j(a'(j)) < v_j(a(j)).$$
\end{theorem}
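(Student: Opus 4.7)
The plan is to prove this by induction on the order in which agents pick, showing that any deviation from the RSD allocation must strictly hurt the first agent (in the picking order) whose assignment differs. I would work under the implicit assumption that preferences are strict (i.e., $v_j$ is injective), since otherwise an agent could be indifferent between two items and swap them with no strict loss; this matches the spirit of the mechanism statement, which has each agent pick some element of the $\arg\max$.

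First I would fix the RSD allocation $a$ and an arbitrary alternative $a' \neq a$, and let $k$ be the smallest index (in the RSD ordering) such that $a'(k) \neq a(k)$. For all $j < k$, we have $a'(j) = a(j)$ by the choice of $k$. Next, I would observe that the set of items still available to agent $k$ when they pick in RSD is precisely $N \setminus \{a(1), \dots, a(k-1)\}$, and since injectivity of $a'$ together with $a'(j) = a(j)$ for $j < k$ forces $a'(k) \in N \setminus \{a(1), \dots, a(k-1)\}$, agent $k$'s choice under $a'$ is drawn from exactly the same set they faced in RSD.

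Then I would invoke the defining property of RSD: agent $k$ was asked to choose $a(k) \in \arg\max_{i \in N \setminus \{a(1), \dots, a(k-1)\}} v_k(i)$. Under strict preferences this argmax is a singleton, so any other available item $a'(k)$ in that set must satisfy $v_k(a'(k)) < v_k(a(k))$, which gives the required agent $j = k$ witnessing strict loss.

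The main obstacle is really the handling of ties in preferences: with weak preferences the theorem as stated is false (an agent could break a tie differently and produce a different but welfare-equivalent allocation). I would address this by either explicitly adding the strict preferences assumption in the proof, or by interpreting the theorem in the weaker but standard Pareto sense (no $a'$ makes someone strictly better off without making anyone else strictly worse off) and noting that the induction above still goes through because the indifferent agent $k$ is weakly worse off and one then continues the induction to find some strictly-worse agent downstream, since otherwise $a'$ would coincide with $a$ up to ties and would not really be a Pareto improvement.
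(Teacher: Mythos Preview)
The paper does not actually supply a proof of this theorem; it is stated as ``well known'' and no argument appears either in the main text or in the appendix of omitted proofs. Your argument---locating the first agent in the picking order whose assignment differs under $a'$ and invoking that this agent already maximized over exactly the set still available---is the standard proof and is correct.

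Your discussion of ties is appropriate and, in fact, necessary given how the theorem is phrased: the conclusion ``for all $a' \neq a$ there exists $j$ with $v_j(a'(j)) < v_j(a(j))$'' is literally false without strict preferences (an agent indifferent between two items can swap them with no strict loss for anyone). Either reading you propose---assuming injective $v_j$, or retreating to the usual Pareto-efficiency formulation---resolves the issue. Since the paper offers no proof, there is nothing further to compare.
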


However, in the presence of transfers and utility over money, the above theorem is no longer true. 

\begin{example}
Suppose $N = \{x,y\}$ and $M = \{1,2\}$. Let $v_1(x) = 2, v_1(y) = 1, d_1 = 5$ and $v_2(x) = 10, v_2(y) = 1, d_2 = 5$. Then, the RSD outcome when $1$ picks before $2$ is: $a(1) = x, m(2) = y; t(1) = t(2) = 0$. Utilities are
$$u_1(a(1),t(1)) = v_1(x) + 5 + 0 = 7; u_2(a(2),t(2)) = v_2(y) + 5 + 0 = 6.$$
However, now consider the allocation $a'(1) = y, a'(2) = x, t(1) = 2, t(2) = -2$.\footnote{We could have taken $t(1) = n, t(2) = -n$ for any $n \in (1,9)$.} Now, 
$$u_1(a'(1),t(1)) = v_1(y) + 5 + 2 = 8; u_2(a'(2),t(2)) = v_2(x) + 5 -2 = 13.$$
Both individuals are better off, so the RSD outcome was not Pareto efficient.
\end{example}

For additional motivation to see why transfers are crucial, suppose we allowed trades to happen, but without transfers. This amounts to running TTC after RSD. We then get the following:

\begin{theorem}
    A Nash equilibrium of RSD followed by TTC is for each agent to choose their best available item in the RSD stage and for no trades to happen in the TTC stage. 
\end{theorem}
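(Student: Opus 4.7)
The plan is to verify that the stated strategy profile—each agent picks their best available item in RSD and reports preferences truthfully in TTC—is a Nash equilibrium by ruling out strictly profitable unilateral deviations. TTC with endowments is strategy-proof, so fixing any RSD outcome I can assume the deviator reports truthfully in TTC; this reduces the question to showing no agent benefits from deviating in the RSD stage alone. Let $i_j^{*}$ denote agent $j$'s greedy RSD pick when everyone else plays the candidate equilibrium.

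The first step is to show that under the candidate profile the TTC stage produces no trades, so each agent's equilibrium payoff is $v_j(i_j^{*}) + d_j$. I would use the standard minimum-index trick: suppose for contradiction that TTC contains a non-trivial cycle $a_1 \to a_2 \to \cdots \to a_k \to a_1$, where each $a_r$ owns $i_{a_r}^{*}$. Rotate so that $a_1 = \min\{a_1,\dots,a_k\}$; then $a_2 > a_1$, meaning $i_{a_2}^{*}$ was still available when $a_1$ chose in RSD. Greedy play gives $v_{a_1}(i_{a_1}^{*}) \geq v_{a_1}(i_{a_2}^{*})$, contradicting the fact that $a_1$ strictly prefers $a_2$'s item in TTC. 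Hence only self-loops can occur and no trades happen.

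For the deviation step, fix $j$ and suppose $j$ picks $i' \neq i_j^{*}$ in RSD while all others play greedy. Availability at every earlier turn is unchanged, so agents $1,\dots,j-1$ still hold $i_1^{*}, \dots, i_{j-1}^{*}$; agents after $j$ may shift their picks but in any case play greedy over what is available to them. Now rerun the minimum-index argument on any non-trivial TTC cycle. For any $a_r \neq j$ in the cycle, the same contradiction applies whether $a_r$ points at another greedy agent's endowment or at $j$'s endowment $i'$, since $i'$ was available to every agent who picked before $j$. The only way to escape the contradiction is for $j$ itself to be the minimum-indexed member of the cycle, so any non-trivial cycle containing $j$ has the form $j \to b_1 \to \cdots \to b_\ell \to j$ with every $b_r > j$.

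Finally, in such a cycle $j$ receives $b_1$'s endowment. Since $b_1 > j$ picks after $j$, that item was available at $j$'s RSD turn, so $v_j(i_j^{*}) \geq v_j(b_1\text{'s endowment})$. Thus after the deviation $j$'s payoff is at most $v_j(i_j^{*}) + d_j$, the equilibrium payoff, so the deviation is not strictly profitable. The main obstacle I anticipate is the bookkeeping in the deviation step—specifically, cleanly verifying that the minimum-index contradiction still rules out every cycle not containing $j$, and that a pointer from some $a_r$ into $j$'s endowment $i'$ does not create a loophole—but once one notes that $i'$ was available to every agent picking before $j$, the same greedy inequality closes that case uniformly.
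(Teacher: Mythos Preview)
Your proof is correct and rests on the same revealed-preference observation as the paper's: any agent who picked greedily prefers their own endowment to anything held by an agent who picked later, so such an agent will never point ``forward'' in TTC. The paper packages this as a short induction on RSD position (``$a_n$ can never trade with an agent who chose before them, so $a_n$'s payoff is capped at the value of the best available item''), while you unpack the same idea into an explicit minimum-index argument on TTC cycles; your version is more careful about how TTC actually resolves, explicitly verifying that after a unilateral deviation by $j$ the only non-trivial cycle must have $j$ as its earliest member and therefore can only deliver $j$ an item that was already in $I_j$.
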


\begin{proof}
    We will prove this by induction on the spot in which an agent is listed in the RSD stage. The agent that picks first can never do better than matching with their best item, so they cannot have any profitable deviations away from picking their (unconstrained) top choice. Next, consider the agent in $n$th place and suppose all agents before them all choose their top available item. Let $a_n$ denote this agent.

    No matter what item $a_n$ chooses, they will never be able to trade with an agent that chose before them: $a_n$'s object was available for everyone that chose before them so by revealed preference, agents that chose before $a_n$ must prefer their item over any item that $a_n$ might have chosen. Thus, $a_n$'s utility is upper bounded by the utility of their favorite item, so there cannot be any profitable deviations from directly choosing it.
\end{proof}

\section{Ex-Post Transfers}

Allowing for Ex-Post Transfers adds an additional step to RSD. However, we need to first expand our definition of equilibrium to the case of endowments. Suppose agent $j$ starts with some endowed item $e(j)$. Then, we define:

\begin{definition}[Competitive Equilibrium with Endowments]
A competitive equilibrium with endowments is a price function $p: N \to \mathbb{R}$ and an allocation $a$ such that:
\begin{enumerate}
    \item each agent $j$ has 
    $$a(j) \in \arg\max_{i \in N} v_j(i) - p(i) + p(e(j));$$
    \item the set of items allocated is equal to the set of items that were in some agent's endowment.\footnote{This second condition is standard in general equilibrium theory and (with some minor assumptions) is equivalent to the condition that un-allocated items have a price of zero as an item will be allocated if and only if someone chose it in the first stage of RSD.}
\end{enumerate}
\end{definition}

Then, we have the following:

\begin{mech}{(Random) Serial Dictatorship with Ex-Post Transfers}
    Given a set of items $N$, a set of agents $M$, and each agent (privately) has preferences represented by $u_j$, do the following:
    \begin{enumerate}
    \setlength\itemsep{0.5em}
        \item Order the agents $M$ based in some arbitrary order and set $I_1 = I$;
        \item For each agent $j$ in $M$:
        \begin{enumerate}
            \item Ask agent $j$ to choose $i_j \in I_j$;
            \item Set $e(j) = i_j$;
            \item Set $I_{j+1} = I_j \setminus \{i_j\}$.
        \end{enumerate}
        \item Repeat until there are no agents or items left and set $a(j) = \varnothing$ for any remaining agents;
        \item Set $t(j) = 0$ for all agents;
        \item Allow agents to trade until a competitive equilibrium with endowments is reached. If $p:N \to \mathbb{R}$ maps items to their price in equilibrium and $a$ is the allocation of goods in equilibrium, set $t(j) = p(e(j)) - p(a(j))$ to be the value of agent $j$'s endowment minus the cost of agent $j$'s house in equilibrium.
    \end{enumerate}
\end{mech}

One note for completeness: to make sure that $(a,t)$ is a valid outcome, we need to ensure that $\sum_{j \in M} t(j) = 0$. This is true, as 
$$\sum_{j \in M} t(j) = \sum_{j \in M} p(e(j)) - \sum_{j \in M} p(a(j)) = \sum_{i \in range(e)} p(i) - \sum_{i \in range(e)} p(i) = 0$$
where $range(e)$ is the set of items that are in some agent's endowment.\footnote{Mathematically, this is just the range of $e$ when viewed as a function from agents to items.} The final equality is true as both the endowment and allocation must distribute the same goods by the second condition of Competitive Equilibrium with Endowments. It turns out that Competitive Equilibrium is a powerful tool in this environment:

\begin{theorem}[Equilibrium is Pareto-Efficient]
Every outcome of RSD with Ex-Post Transfers must be Pareto-Efficient.
\end{theorem}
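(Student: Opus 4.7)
The plan is to adapt the First Welfare Theorem to the competitive-equilibrium-with-endowments stage that concludes this mechanism. Let $(a,t)$ denote the outcome produced by the mechanism: $e$ is the endowment from the RSD stage, $p$ is the equilibrium price function, $a$ is the equilibrium allocation, and $t(j)=p(e(j))-p(a(j))$. I will suppose toward contradiction that there is some alternative $(a',t')$ with $\sum_{j}t'(j)=0$ that Pareto dominates $(a,t)$, and deduce a budget-balance violation.

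The first step extracts a pointwise inequality from the competitive-equilibrium optimality condition. For each $j$, the condition $a(j)\in\arg\max_{i\in N}\bigl[v_j(i)-p(i)+p(e(j))\bigr]$ yields $v_j(a(j))-p(a(j))\geq v_j(a'(j))-p(a'(j))$. Adding $d_j+p(e(j))$ to both sides and substituting $t(j)=p(e(j))-p(a(j))$, I obtain that $j$'s equilibrium utility is at least $v_j(a'(j))+d_j+p(e(j))-p(a'(j))$. Combining with the Pareto-domination inequality $v_j(a'(j))+d_j+t'(j)\geq v_j(a(j))+d_j+t(j)$ rearranges to $t'(j)\geq p(e(j))-p(a'(j))$ for every $j$, with strict inequality for at least one.

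Summing over $j$ and using $\sum_j t'(j)=0$ then gives $\sum_j p(a'(j))>\sum_j p(e(j))$. Since the paper reduces to the case $|M|=|N|$, every agent receives an item in the RSD stage, so $e$ is a bijection $M\to N$ and (by the second condition of competitive equilibrium) so is $a$. Thus $\sum_j p(e(j))=\sum_{i\in N}p(i)$, and whenever $a'$ is also a bijection onto $N$ the right-hand side equals the left, producing the contradiction $0>0$.

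The principal obstacle I expect is the boundary case in which $a'$ sends some agents to $\varnothing$, so that a strict subset of $N$ is used; the strict inequality from the summation step can only survive this if the unassigned items carry negative prices. I would handle this either by arguing that equilibrium prices can be taken nonnegative under the natural assumption that every agent weakly prefers any item in $N$ to $\varnothing$ (so a negative price would induce strict excess demand), giving $\sum_j p(a'(j))\leq\sum_j p(e(j))$, or more robustly by noting that a Pareto-improving $(a',t')$ with an unused item $i$ can be modified to give $i$ to any agent while adjusting transfers within the slack already provided by the strict domination, reducing to the bijection case above.
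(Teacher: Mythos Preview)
Your argument is correct and is essentially the same approach as the paper's: both rest on the First Welfare Theorem applied to the competitive-equilibrium stage. The paper's proof simply observes that the endowment term $p(e(j))$ is a constant in the $\arg\max$, so the equilibrium-with-endowments condition coincides with the ordinary competitive-equilibrium condition, and then invokes the standard First Welfare Theorem as a black box; you instead write out the FWT contradiction argument in full. Your explicit treatment of the $a'(j)=\varnothing$ boundary case (which hinges on either nonnegativity of values/prices or on filling in unused items) actually goes beyond what the paper addresses, since the paper's one-line appeal to the classical result sidesteps that detail entirely.
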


\begin{proof}
    We will show an equivalence between the definition of Competitive Equilibrium with Endowments and the usual definition of Competitive Equilibrium. First, as $p(e(j))$ is independent of $i$, we have that
    $$\arg\max_{i \in N} v_j(i) - p(i) + p(e(j)) = \arg\max_{i \in N} v_j(i) - p(i)$$
    so 
    $$a(j) \in \arg\max_{i \in N} v_j(i) - p(i) + p(e(j)) \text{ if and only if } a(j) \in \arg\max_{i \in N} v_j(i) - p(i).$$
    Next, the set of goods that are in someone's endowment is exactly equal to the set of goods that are demanded, so 
    $$p(i) = 0 \text{ if and only if } i \text{ is not assigned}.$$
    With these two equivalences, the proof of why competitive equilibrium is optimal without endowments follows. 
\end{proof}

A direct corollary is that the RSD with Ex-Post Transfers allocation of items to agents is exactly the same as the VCG allocation of items to agents.

\begin{proof}
    Both RSD with Ex-Post Transfers and VCG chooses the allocation that maximizes total societal welfare.
\end{proof}

One possible limitation about the above result is that getting to equilibrium in step 5 is a black box. Suppose we replaced step 5 in RSD with Ex-Post transfers with a restriction to pairwise transfers:

\begin{mech}{(Random) Serial Dictatorship with Ex-Post Pairwise Transfers}
    Given a set of items $N$, a set of agents $M$, and each agent (privately) has preferences represented by $u_j$, do the following:
    \vspace{0.5em}
    \begin{enumerate}
    \setlength\itemsep{0.5em}
        \item Order the agents $M$ based in some arbitrary order and set $I_1 = I$;
        \item For each agent $j$ in $M$:
        \begin{enumerate}
            \item Ask agent $j$ to choose $i_j \in I_j$;
            \item Set $e(j) = i_j$;
            \item Set $I_{j+1} = I_j \setminus \{i_j\}$.
        \end{enumerate}
        \item Repeat until there are no agents or items left and set $a(j) = \varnothing$ for any remaining agents;
        \item Set $t(j) = 0$ for all agents;
        \item Allow agents to trade. If $k$ is willing to pay $p$ to trade with $j$ then set:
         \begin{itemize}
            \item $a'(k) = a(j)$ and $a'(j) = a(k)$ where $a'$ is the new allocation after the trade;
            \item $t'(k) = t(i) - p$ and $t'(j) = t(k) + p$ where $t'$ is the new transfer after the trade.
        \end{itemize}
    \end{enumerate}
\end{mech}

This interpretation is much closer to what happens in the Stanford housing aftermarket: individuals make pairwise agreements to trade houses instead of a centralized market forming. Do our optimally results still hold? Our next example provides a negative answer that uses the following Lemma:

\begin{lemma}[Characterization of Trades]
\label{trade1}
    A trade between agent $j$ with item $a(j)$ and agent $k$ with item $a(k)$ is possible if and only if 
    $$v_j(a(k)) + v_k(a(j)) > v_j(a(j)) + v_k(a(k)).$$
\end{lemma}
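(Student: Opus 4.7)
The plan is to characterize trades by writing down each agent's indifference condition and showing that a mutually beneficial price exists exactly when the stated inequality holds. The proof is essentially two implications bundled as an ``iff'' coming from the existence of a price in a certain interval.

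First I would set up the utilities induced by a proposed trade at price $p$. Using the mechanism's specification, after the trade agent $k$ holds $a(j)$ and has transfer $t(k) - p$, while agent $j$ holds $a(k)$ and has transfer $t(j) + p$. Since utilities are quasilinear, the net gain to $k$ from trading is $v_k(a(j)) - v_k(a(k)) - p$ and the net gain to $j$ is $v_j(a(k)) - v_j(a(j)) + p$. A trade is ``possible'' precisely when both parties strictly prefer to trade, i.e., when there exists $p \in \mathbb{R}$ such that
\[
p < v_k(a(j)) - v_k(a(k)) \qquad \text{and} \qquad p > v_j(a(j)) - v_j(a(k)).
\]

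Next I would observe that such a $p$ exists if and only if the lower bound is strictly less than the upper bound, i.e.,
\[
v_j(a(j)) - v_j(a(k)) < v_k(a(j)) - v_k(a(k)),
\]
which, after rearranging, is exactly the inequality $v_j(a(k)) + v_k(a(j)) > v_j(a(j)) + v_k(a(k))$ claimed by the lemma. For the forward direction, any price at which the trade actually occurs sits inside the interval, so the interval is nonempty; for the reverse direction, any $p$ strictly between the two bounds (e.g., the midpoint) witnesses a feasible trade.

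The only subtlety, and the closest thing to an obstacle, is fixing the convention of what ``possible'' means: strict versus weak improvement. The natural reading of the mechanism is that agents will only willingly execute a trade that makes both strictly better off (otherwise nothing distinguishes trading from not trading), which is consistent with the strict inequality in the lemma. Once this convention is pinned down, the argument is purely algebraic and needs no appeal to earlier results in the paper.
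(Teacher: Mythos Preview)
Your proposal is correct and is essentially the same argument as the paper's. The paper splits the two directions, constructing the midpoint transfer explicitly for sufficiency and deriving a contradiction for necessity, whereas you fold both into the single observation that the open interval $\bigl(v_j(a(j)) - v_j(a(k)),\, v_k(a(j)) - v_k(a(k))\bigr)$ is nonempty if and only if the stated inequality holds; these are just two presentations of the same elementary computation.
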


\begin{example}[Inefficiency of Pairwise Transfers]
    Suppose there are three agents $A,B,C$ and items $1,2,3$. Payoffs are described in the following matrix: 
    
    \begin{table}[h!]\centering
    \begin{tabular}{c|c|c|c|}
      & 1 & 2 & 3 \\ \hline
      $A$ & 5 & 0 & 10 \\ \hline
      $B$ & 0 & 4 & 0 \\ \hline
      $C$ & -10 & 0 & 5 \\ \hline
    \end{tabular}
    \end{table}
    
    where, for example, the $4$ in row $B$, column $2$ corresponds to agent $B$ gaining a utility of $4$ from item $2$. Suppose RSD with pairwise transfers is run with $A$ picking first, then $C$, and finally $B$. First, $A$ picks item $3$; then $C$ picks item $2$, and finally $B$ is left with item $2$. Thus, the allocation before trade is $a(A) = 3, a(B) = 1, a(C) = 2$. Now, note that no pairwise trades are possible: 
    \begin{enumerate}
        \item $A$ and $B$ cannot trade as $v_A(a(B)) + v_B(a(A)) = 5+0 < 0+10 = v_A(a(A)) + v_B(a(B))$;
        \item $A$ and $C$ cannot trade as $v_A(a(C)) + v_C(a(A)) = 5+0 < 10+0 = v_A(a(A)) + v_C(a(C))$;
        \item $B$ and $C$ cannot trade as $v_B(a(C)) + v_B(a(A)) = 0+0 < 4-10 = v_B(a(B)) + v_C(a(C))$.
    \end{enumerate}
    Thus, no trades are possible from this allocation. However, the efficient allocation is $a(A) = 1, a(B) = 2, a(C) = 5$. This is reachable through a ``joint trade'' of all three agents (for instance, $B$ pays $3$ to $A$ and $C$ pays $3$ to $A$ to make this happen).
\end{example}

In general, this counterexample holds as $v_j(a(j)) + v_k(a(k)) \geq v_j(a(k)) + v_k(a(j))$ for all $j,k \in M$ does not imply Pareto efficiency. However, we do have a weakened version of Theorem 3.1:

\begin{theorem}[Transfers are Pareto-Improving]
The outcome after step 5 of RSD with Ex-Post Transfers will always be a Pareto improvement over the outcome after step 4. 
\end{theorem}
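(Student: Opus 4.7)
My plan is to reduce the theorem to a statement about single voluntary trades and then compose those statements. Since step 5 (whether interpreted as the competitive equilibrium version of \emph{Ex-Post Transfers} or the pairwise version introduced immediately before this theorem) only permits trades that agents voluntarily accept, the transition from the post-step-4 outcome to the post-step-5 outcome is generated by moves that are individually rational for every agent involved. The key fact to exploit is that no agent ever accepts a change that makes them strictly worse off.

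For the pairwise interpretation, I would start from Lemma \ref{trade1}. If agent $k$ pays $p$ to trade items with agent $j$, then $j$'s utility changes from $v_j(a(j)) + d_j + t(j)$ to $v_j(a(k)) + d_j + t(j) + p$, and $k$'s changes analogously. For both parties to consent, $p$ must satisfy
\[
v_j(a(j)) - v_j(a(k)) \;\le\; p \;\le\; v_k(a(j)) - v_k(a(k)),
\]
which is exactly the strict surplus condition in Lemma \ref{trade1}. Hence, whenever a trade occurs, both participants are weakly better off; every other agent's allocation and transfer are literally untouched by a pairwise trade, so they are unaffected.

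The theorem then follows by induction on the sequence of trades in step 5. The base case (zero trades) is trivial: step 4 and step 5 coincide. For the inductive step, the outcome after $n+1$ trades weakly Pareto-dominates the outcome after $n$ trades by the single-trade argument, so transitivity of weak Pareto dominance gives the result. Termination is worth a sentence: each trade strictly increases the total welfare $\sum_j v_j(a(j))$, which is bounded above over the finite set of allocations, so only finitely many trades occur.

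For the competitive-equilibrium version of step 5, I would simply note that each agent $j$ can always choose to keep their own endowment $e(j)$, which gives them $v_j(e(j)) - p(e(j)) + p(e(j)) = v_j(e(j))$ in the argmax defining competitive equilibrium with endowments. Thus the equilibrium value $v_j(a(j)) - p(a(j)) + p(e(j))$ is at least $v_j(e(j))$, so agent $j$'s post-step-5 utility weakly exceeds $v_j(e(j)) + d_j$, their post-step-4 utility. The main conceptual point, and the only subtlety, is insisting that ``voluntary trade'' rules out agents accepting strictly worse outcomes; given that convention, neither version presents a substantive obstacle and the proof is essentially bookkeeping.
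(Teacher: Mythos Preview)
Your proof is correct and follows essentially the same approach as the paper: each voluntary pairwise trade is a (weak) Pareto improvement for the two participants and leaves everyone else untouched, so induction on the sequence of trades gives the result, with finiteness of the trade sequence guaranteed because total welfare strictly increases and is bounded over the finite set of allocations. The paper's termination argument is phrased slightly differently---it invokes a positive lower bound on the per-trade welfare gain coming from the finite value set---but this is the same idea; your formulation via the finiteness of allocations is arguably cleaner. Your additional paragraph handling the competitive-equilibrium interpretation (each agent can always ``buy back'' their own endowment, so the equilibrium utility weakly dominates the endowment utility) is a correct extra that the paper does not include.
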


\begin{proof}
Every pairwise trade is a Pareto improvement since it must be mutually beneficial to both parties involved. Thus, it suffice to show that the number of possible trades is finite as then inducting on the number of trades gives the desired result. Towards a contradiction, suppose there are an infinite number of trades made. In each trade, at least one person must be strictly better off. Furthermore, since the set of values over all agents and items is a finite discrete set, there is some lower bound on the gain in utility from each trade. As such, if there are an infinite number of trades, societal utility would become arbitrarily large, which cannot be the case. As such, there can only be a finite number of trades, and the proof is complete.
\end{proof}

That being said, in the special case of all agents having the same preferences over items, we see that RSD with Ex-Post pairwise transfers is equivalent to plain RSD.

\begin{theorem}[Special Case of Identical Preferences]
    If all agents have the same preferences over items, RSD with ex-post pairwise transfers is equivalent to using plain RSD.
\end{theorem}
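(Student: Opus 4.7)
The plan is to combine Lemma \ref{trade1} with a revealed-preference argument analogous to the one used for RSD followed by TTC. The first step is to observe that when $v_1 = v_2 = \cdots = v_m = v$, the characterization in Lemma \ref{trade1} shows no pairwise trade is ever strictly beneficial: for any two agents $j, k$ and any allocation $a$,
\[
v_j(a(k)) + v_k(a(j)) = v(a(k)) + v(a(j)) = v(a(j)) + v(a(k)) = v_j(a(j)) + v_k(a(k)),
\]
so the strict inequality required to trigger a trade never holds. Consequently step 5 of the mechanism is vacuous, transfers remain at $t(j) = 0$ for all $j$, and every agent's realized utility depends only on the item they acquire in the picking phase.

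Given that the trading stage always ends with zero transfers and the original endowment, I would then argue by induction on picking position, mirroring the proof of Theorem 3.2 (RSD followed by TTC). The first agent cannot improve on their unconstrained favorite item. For the inductive step, suppose every earlier agent has picked their favorite remaining item; then the $n$th agent cannot resell whatever they choose (no profitable trade will exist regardless of who holds which item), so their payoff is simply $v(i) + d_j$ over the remaining set $I_j$, which is maximized by the best available item under the common preference order. This recovers exactly the behavior and allocation of plain RSD.

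The only conceivable subtlety is ruling out a strategic motive to grab an undesired item in the hope of reselling it for a profit. That possibility is exactly what the identical-preferences hypothesis eliminates via Lemma \ref{trade1}, because any such trade would have zero joint surplus and therefore cannot be strictly profitable to both parties. Combining these two observations, the allocation produced by RSD with ex-post pairwise transfers coincides with the plain-RSD allocation and the transfer profile is identically zero, establishing outcome equivalence of the two mechanisms.
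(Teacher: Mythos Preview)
Your proposal is correct and follows essentially the same route as the paper: both arguments first establish that no pairwise trade can be strictly beneficial when all $v_j$ coincide, and then conclude that the picking phase reduces to plain RSD. The only cosmetic difference is that you invoke Lemma~\ref{trade1} directly (and make the induction on picking order explicit), whereas the paper reproves the no-trade conclusion by a bare-hands price argument and leaves the reduction to RSD informal.
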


\begin{proof}
Assume we are matching N items to M agents using the mechanism RSD with ex-post pairwise transfers, and assume all agents have the same preferences over items. Assume that if at least one agent prefers trading to not trading and the other agent is at least indifferent between trading and not trading then the two agents trade.

Since all agents have the same preferences, no agents will want to trade from any initial endowment.

To see this, consider for contradiction that there are some agents A and B endowed with items 1 and 2, respectively, that are both willing and able to trade.

Without loss of generality, we know that $v_A(1) = v_B(1) = x, v_A(2) = v_B(2) = y, x \geq y$. Thus, agent A must receive a payment $p > x-y$, and agent B will only pay $p' \leq x-y$ (or agent A must receive a payment $p \geq x-y$, and agent B will only pay $p' < x-y$). This is a contradiction: $p \not= 'p$, so there is no possible trade between agents A and B.

Thus, since agents know no trade will occur during the transfer period, the strategic situation reduces to become equivalent to plain RSD: since the item agents end up with is the item they choose in the RSD portion of the mechanism, they behave identically to using the plain RSD mechanism.

\end{proof}

An interesting conclusion from this characteristic of RSD with ex-post transfers is that sufficiently aligned preferences among agents should lead the mechanism to function very similarly to plain RSD.

Another interesting and characteristic property of RSD with ex-post transfers is the fact it is not strategy-proof. To see this, consider the below example.\\

\begin{example}[Not Strategy-Proof]

    Let the mechanism being used be RSD with ex-post transfers, allocating the set of N items among M agents. Let agent A place a value of 10 utility on all rooms except room 1, which he values at 20 utility. Let agent B place a value of 10 utility on all rooms except room 2, which she values at 200 utility. Let all other agents place a value of 10 utility on all rooms.

    Say A picks first and has a budget of 0 dollars, B picks second and has a budget of 500 dollars, A knows B's budget and preferences, and B knows A's budget and preferences. All other agents choose rooms randomly (since they're indifferent between receiving any room). Consider A's strategy: 
    \begin{itemize}
        \item A chooses room i, $i \not= 1, 2$. A's payoff is $10$: all other agents value room i at 10 and receive a room worth at least 10 to them, so there will be no buyers who will pay more than 0 to get the room and A will only sell for more than 0.
        \item A chooses room 1. A's payoff is $20$: all other agents value room 1 at 10 and receive a room worth at least 10 to them, so there will be no buyers who will pay more than 0 to get the room and A will only sell for more than 10.
        \item A chooses room 2. B chooses room 1 to gain greater leverage for trading with A. A's payoff will be $20 + t$, $0 \leq t < 190$: B would only be willing to buy from A for less than 190 (receiving a room worth 200, minus trading away a room worth 10 and making some payment t) and A would only be willing to sell for more than or equal 0. B doesn't offer to sell room 1 to A since A has 0 budget. Thus, A's payoff will be $20 + t \geq 20$.
    \end{itemize}

    Thus, we see that A can gain (immensely) from acting strategically rather than honestly: if A just charges a 'half-way' price in the third scenario, then $t = 95$. Generalizing this example, we see that RSD with ex post transfers is very far from strategy-proof--in fact, there can be very strong incentives to act strategically, specifically to buy `popular' rooms and sell high on the transfer market.

\end{example}

\section{Strategic Analysis of Ex-Post Transfers and the Interim Case}

We've seen that adding transfers makes RSD no longer strategy-proof. A natural question is what actually happens. To that end, this section will try to characterize equilibrium behavior. However, we need the following assumption to make analysis tractable. Suppose people's preferences over all rooms are revealed as soon as they select their room, and no strategizing is possible over people who have not selected yet.\footnote{To provide some justification, an alternative assumption that implies that no strategizing is possible over people who have not yet selected is to assume that individuals are extremely risk-adverse. As such, they would prefer any decent room with certainty than to gamble over potential outcomes. On the other hand, if there are two people who have entered into an agreement with certainty, this agreement is equivalent to swapping their places, and the same analysis holds.} 

We will now analyze the subgame-perfect Nash equilibrium of RSD with transfers. After RSD is run, we assume that there is no strategizing in the transfers phase.\footnote{Analyzing strategy in this phase goes beyond the current scope of the paper. However, some related literature could include uniqueness of general equilibrium (\cite{sandberg_1979_uniqueness}) or bargaining (\cite{rubinstein_1982_perfect}).} As such, we focus on the distortionary effect transfers has on how choices within RSD are made. 

Proceeding via backwards induction, the agent who is assigned to choose last has only one choice of item, namely the one item left over. Next, consider an agent that picks from a (nontrivial) menu of houses. There are several possible ways for them to pick:
\begin{enumerate}
    \item Pick an item to maximize utility from that item itself;
    \item Pick an item that an agent who has already chosen likes to try and trade with them;
    \item Pick an item and hope to trade with someone that picks after you.
\end{enumerate}
We can immediately rule out the third choice: by assumption, agents will not choose an item expecting an agent that chooses later to buy it. Similarly, agents will not pay for an item allocated to an agent that chooses later as they could simply choose the second item themselves and pay nothing.

The first option is straightforward: the agent compares all available options and chooses their favorite item. However, picking an item while anticipating selling it could be a better option: if this is the case, we get the following:

\begin{lemma}
\label{trade}
    Suppose at agent $j$'s turn to pick, they choose item $i$ to trade with agent $\hat{j}$ who currently has item $\hat{i}$. Then, if $I_j$ is the set of items available when it is $j$'s turn to pick, it must be that
    $$i \in \arg\max_{i' \in I_j} v_{\hat{j}}(i').$$
\end{lemma}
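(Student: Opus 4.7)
The plan is to bound agent $j$'s achievable payoff from picking $i$ and trading with $\hat{j}$ by an expression that depends on $i$ only through $v_{\hat{j}}(i)$, and then observe that maximizing this bound forces $i$ to be $\hat{j}$'s favorite in $I_j$.

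First I would write out $\hat{j}$'s participation constraint in the pairwise trade in which $j$ gives up $i$, receives $\hat{i}$, and is paid $p$ by $\hat{j}$. For $\hat{j}$ to weakly prefer this swap to keeping $\hat{i}$ and paying nothing, we need $v_{\hat{j}}(i) - p \geq v_{\hat{j}}(\hat{i})$, i.e.\ $p \leq v_{\hat{j}}(i) - v_{\hat{j}}(\hat{i})$. Agent $j$'s realized utility from the trade is $v_j(\hat{i}) + p$, so it is bounded above by
\[
v_j(\hat{i}) + v_{\hat{j}}(i) - v_{\hat{j}}(\hat{i}).
\]
Of these three terms, only $v_{\hat{j}}(i)$ varies with $j$'s choice of $i$, since $\hat{i}$ was already fixed when $\hat{j}$ picked before $j$. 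Consequently this upper bound is strictly monotone in $v_{\hat{j}}(i)$ and is maximized precisely on $\arg\max_{i' \in I_j} v_{\hat{j}}(i')$.

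Next I would close the strategic loop by arguing that, since $j$ is the one deliberately picking $i$ in order to facilitate a trade with $\hat{j}$, $j$ is naturally modeled as the proposer of a take-it-or-leave-it offer (or more generally as having full bargaining power), in which case the upper bound above is attained. Therefore any choice $i$ with $v_{\hat{j}}(i) < \max_{i' \in I_j} v_{\hat{j}}(i')$ is strictly dominated by picking some $i^* \in \arg\max_{i' \in I_j} v_{\hat{j}}(i')$, so a rational $j$ pursuing the ``pick to sell'' strategy must choose $i$ from $\arg\max_{i' \in I_j} v_{\hat{j}}(i')$.

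The main obstacle is that the paper does not formally specify the bargaining protocol governing the transfer phase, which matters because under, say, Nash bargaining with equal weights the maximizer over $i$ would in general depend on $v_j(i)$ too. I expect the cleanest resolution is exactly the take-it-or-leave-it reading just described: it is the natural extensive-form interpretation of ``$j$ chooses $i$ in order to trade with $\hat{j}$'' and it renders the upper-bound argument tight. Everything else in the lemma is a straightforward consequence of Lemma~\ref{trade1} and the fact that $\hat{i}$ is pinned down before $j$'s turn.
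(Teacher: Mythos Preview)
Your argument is essentially the paper's: write $\hat{j}$'s participation constraint, bound $j$'s payoff by $v_j(\hat{i}) + v_{\hat{j}}(i) - v_{\hat{j}}(\hat{i})$, and observe that only the middle term varies with the choice of $i \in I_j$. Your transfer runs in the opposite direction (the paper has $j$ paying $\hat{j}$ an amount $t \geq v_{\hat{j}}(\hat{i}) - v_{\hat{j}}(i)$ rather than being paid $p$), but this is a harmless sign flip, and your explicit flagging of the take-it-or-leave-it bargaining assumption is something the paper leaves mostly implicit.
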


\begin{proof}
    For a trade to happen, $j$ needs to pay $\hat{j}$ some $t > 0$ so to simplify notation, let $t(j) = -t$ and $t(\hat{j}) = t$. Then, for the trade to be agreeable for $\hat{j}$, we need that
    $$v_{\hat{j}}(i) + t \geq v_{\hat{j}}(\hat{i}) \implies  t \geq  v_{\hat{j}}(\hat{i}) - v_{\hat{j}}(i)$$
    so agent $j$'s maximization problem is
    $$\max_{t, i' \in I_j} v_j(\hat{i}) - t \text{ s.t. } t \geq  v_{\hat{j}}(\hat{i}) - v_{\hat{j}}(i).$$
    As such, we can get an upper bound on agent $j$'s utility to be
    $$\max_{t, i' \in I_j} v_j(\hat{i}) - t \text{ s.t. } t \geq  v_{\hat{j}}(\hat{i}) - v_{\hat{j}}(i) \leq \max_{i' \in I_j} v_j(\hat{i}) - [v_{\hat{j}}(\hat{i}) - v_{\hat{j}}(i)] = v_j(\hat{i}) - v_{\hat{j}}(\hat{i}) + \max_{i' \in I_j} v_{\hat{j}}(i').$$
    As such, choosing $i'$ that maximizes $v_{\hat{j}}(i')$ is agent $j$'s best option. Furthermore, for any fixed allocation of the gains from trade between $j$ and $j'$, if $\arg\max_{i' \in I_j} v_{\hat{j}}(i')$ is unique then choosing the maximizer is a strictly dominant strategy.
\end{proof}

While this result is intuitive, how applicable is it for practical analysis of RSD with ex-post transfers? A major concern is whether or not the agent $j$ is trying to trade with will receive a better offer from some other agent $j^*$. As such, the prior Lemma is not enough to fully characterize subgame-perfect behavior. In general, if $j$ is faced with choice set $I_j$, each $i \in I_j$ will induce some expected payoff, and $j$ will choose the item that maximizes their expected payoff. Another way to resolve this uncertainty is to incorporate trades into the RSD process itself. We define the following mechanism:

\begin{mech}{(Random) Serial Dictatorship with Interim Transfers}
    Given a set of items $N$, a set of agents $M$, and each agent (privately) has preferences represented by $u_j$, do the following:
    \vspace{0.5em}
    \begin{enumerate}
    \setlength\itemsep{0.5em}
       \item Order the agents $M$ based in some arbitrary order, set $t(j) = 0$ for all agents, and set $I_1 = I$;
        \item For each agent $j$ in $M$:
        \begin{enumerate}
            \item Ask agent $j$ to choose $i_j \in I_j$;
            \item Set $a(j) = i_j$;
            \item Set $I_{j+1} = I_j \setminus \{i_j\}$;
            \item Allow agent $j$ to offer a trade with any agent $j' < j$ at a price of $t$;
            \begin{itemize}
                \item If the trade is accepted, swap $a(j)$ and $a(j')$ and set $t(j) = t(j)-t, t(j') = t(j')+t$;
                \item If the trade is not accepted or agent $j$ does not offer a trade, do nothing.
            \end{itemize}
        \end{enumerate}
        \item Repeat until there are no agents or items left and set $a(j) = \varnothing$ for any remaining agents;
    \end{enumerate}
\end{mech}

By construction, only one offer is made at a time, so combined with the assumption that agents do not strategize over future agents, there will no longer be cases where one offer is rejected in favor of a better offer. A structural note about this mechanism: observe each agent can only make a single trade. This is without loss, as the set of (feasible) trades attains a maximum via the logic of Lemma \ref{trade} so each agent will only need to make a single offer. 

Unfortunately, interim transfers does not resolve the possibility of failing to trade to the societally optimal allocation. Consider the following example:\footnote{Courtesy of feedback from the teaching team.}

\begin{example}[Inefficiency of Interim Transfers]\label{ineff}
    Suppose there are three agents $A,B,C$ and items $1,2,3$. Payoffs are described in the following matrix: 
    
    \begin{table}[h!]\centering
    \begin{tabular}{c|c|c|c|}
      & 1 & 2 & 3 \\ \hline
      $A$ & 10 & 9 & 0 \\ \hline
      $B$ & 0 & 10 & 9 \\ \hline
      $C$ & 4 & 0 & 1 \\ \hline
    \end{tabular}
    \end{table}
    
    Then, the allocation that maximizes societal welfare is $(A,2);(B,3);(C,1)$. However, running serial dictatorship with interim transfers leads to $A$ picking $1$ and no trades; $B$ picking $2$ with no trades, and $C$ picking $3$ with no trades. 
\end{example}

A sufficient condition to reach the allocation that maximizes societal welfare is for a path to achieving that to exist: if it is feasible to achieve the optimal allocation, then every individual acting in their own self-interest gets to that allocation.

\begin{theorem}
\label{interim}
    Suppose that when any agent goes to choose their item, it is feasible for them to induce the optimal allocation up to that point. Then, RSD with Interim Transfers always yields the allocation that maximizes total societal utility.
\end{theorem}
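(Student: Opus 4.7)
The plan is to prove this by induction on $k$, the position of the current agent in the ordering, maintaining the invariant that after agent $k$'s turn (including any trade they initiate) has concluded, the current allocation restricted to agents $1,\ldots,k$ coincides with the socially optimal allocation $a^*$ restricted to these same agents.

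For the base case $k=1$, agent $1$ has no prior agent to trade with, so they simply pick an element of $\arg\max_{i \in N} v_1(i)$. The feasibility hypothesis at step $1$ forces this maximizer to equal $a^*(1)$: any other pick would make the socially optimal allocation permanently unreachable, contradicting the hypothesis.

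For the inductive step, I assume the invariant through agent $k-1$, so that the remaining items satisfy $I_k = N \setminus \{a^*(1),\ldots,a^*(k-1)\}$ and in particular $a^*(k) \in I_k$. The key structural observation is that extending the invariant to step $k$ forces agent $k$ to pick $a^*(k)$ directly and decline any trade: any trade with a prior agent $j$ swaps their items, moving $a^*(j)$ away from $j$, and because $a^*$ is injective, agent $k$ receiving $a^*(j)$ cannot simultaneously preserve agent $j$'s optimal match. So the only route consistent with the inductive invariant is for agent $k$ to pick $a^*(k)$ outright. I then invoke the feasibility hypothesis at step $k$: under subgame-perfect play, the rational action of agent $k$ must achieve the optimal allocation up to that point, so agent $k$ actually chooses $a^*(k)$ over any alternative. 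The deviations to rule out are picking some other item directly, and picking some item $i$ in order to trade with a prior agent $j$ — for which Lemma \ref{trade} pins down $i$ as $\arg\max_{i' \in I_k} v_j(i')$ and caps agent $k$'s resulting utility at $v_k(a^*(j)) + v_j(i) - v_j(a^*(j))$. Since the feasibility hypothesis says no such deviation is strictly preferred to picking $a^*(k)$, the induction closes.

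The main obstacle is really interpretive rather than technical: one has to read ``it is feasible for them to induce the optimal allocation'' as subsuming incentive compatibility of the current agent, not merely physical realizability. Under that reading — which is the only sensible one given Example \ref{ineff}, where the optimal allocation is physically realizable yet never induced under self-interested play — the argument reduces to the clean structural induction above, with Lemma \ref{trade} doing the work of bounding the payoff from any trade-based deviation.
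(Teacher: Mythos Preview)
Your inductive invariant is the wrong one, and this causes the argument to fail on instances the theorem is meant to cover. You maintain that after step $k$ the allocation restricted to $\{1,\dots,k\}$ equals the \emph{global} optimum $a^*$ restricted to those agents, and in particular that no trades ever occur. The paper's invariant is different: after step $k$ the current allocation maximizes the partial welfare $\sum_{j'=1}^{k} v_{j'}(a(j'))$, which need not agree with $a^*|_{\{1,\dots,k\}}$, and maintaining it generally requires the current agent to trade. Concretely, take three agents $A,B,C$ (choosing in that order) and three items with $v_A(1)=10$, $v_B(2)=10$, $v_C(1)=100$, and all other values zero. The global optimum has $a^*(A)=3$, yet agent $A$ picks item $1$ since there is nobody to trade with; your invariant already fails at $k=1$. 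Nonetheless the mechanism reaches $a^*$: agent $C$ later picks item $3$ and pays $A$ to swap. The paper's invariant tracks exactly this --- $(A\to 1)$ maximizes $v_A$ after step $1$, and the trade at step $3$ produces the full optimum. The paper's feasibility hypothesis is satisfied at every step here, so this instance lies within the scope of the theorem but breaks your proof.

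Your reading of ``feasible'' as subsuming incentive compatibility is neither the paper's intent nor forced by Example~\ref{ineff}. In that example the hypothesis fails in the purely \emph{physical} sense at step $3$: given that $A$ holds item $1$ and $B$ holds item $2$ after equilibrium play, agent $C$ can only pick item $3$ and swap with at most one of them, so $C$ cannot reach the partial optimum $(A\to 2,\,B\to 3,\,C\to 1)$ at all. Thus ``feasible'' just means achievable within the mechanism's rules, and the substantive work of the proof is to show that the achievable partial optimum is also the self-interested choice. The paper's inductive step does this via two cases (no trade versus trade), deriving the same kind of utility bound you cite from Lemma~\ref{trade}, but applied to compare the partial-welfare-maximizing action against any other feasible action --- not to compare $a^*(k)$ against deviations.
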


Fortunately for practical application (but unfortunately for theoretical sharpness), the feasibility assumption is not a necessary condition: consider the same example from before, but with an added agent and item:

\begin{example}[Interim Transfers May Work]
    Suppose there are four agents $A,B,C,D$ and items $1,2,3,4$. Payoffs are described in the following matrix: 
    
    \begin{table}[h!]\centering
    \begin{tabular}{c|c|c|c|c|}
      & 1 & 2 & 3 & 4\\ \hline
      $A$ & 10 & 9 & 0 & 0\\ \hline
      $B$ & 0 & 10 & 9 & 0\\ \hline
      $C$ & 4 & 0 & 1 & 0\\ \hline
      $D$ & 0 & 0 & 100 & 0\\ \hline
    \end{tabular}
    \end{table}
    
    Then, the allocation that maximizes societal welfare is now $(A,1);(B,2);(C,4), (D,3)$. Running Serial Dictators with interim transfers leads to the same outcome as Example \ref{ineff} for agents $A,B,C$. However, $D$ can choose item $4$ and pay agent $C$ a sufficiently high amount to induce a trade. Then, the resulting allocation is $(A,1);(B,2);(C,4);(D,3)$ which maximizes societal welfare.
\end{example}

Intuitively, future agents potentially can correct for inefficiencies by making correctional trades. A question for further research would be to find some necessary and sufficient condition on agent utilities that takes this into account and sharply characterizes when interim transfers induce the allocation that maximizes societal welfare.

\section{Two Agents and Items}

Clearly, the informational assumptions made in the last section (perfect information about agents that have already chosen and no strategizing about agents that choose afterwards) are not realistic. This section presents analysis under some different simplifying assumptions. 

Without loss of generality, let agent $j$ be the agent that chooses $j$th. Let $F_j^i$ be a cumulative distribution function for agent $j$'s value for item $i$, so 
$$F_j^i(x) = \P{v_j(i) \leq x}.$$
Assume that for all agents $j$ and items $i$, we have that $F_j^i$ is continuous, has compact support, and there exists $\underline{V}, \overline{V}$ such that $F_j^i(\underline{V}) = 0, F_j^i(\overline{V}) = 1$. In the information structure for this section, all $F_j^i$'s are commonly known, but each agent only knows their own valuation. As such, the whole timing of the game is as follows:
\begin{enumerate}
    \item Nature reveals to each agent their private valuations of each item;
    \item Serial Dictatorship with Interim Transfers is run;
    \item Payoffs are realized.
\end{enumerate}
Suppose $N = \{A,B\}$ and $M = \{1,2\}$ so there are two items and two agents. What is the equilibrium of this game?

We will proceed using backwards induction. Consider the subgame after $1$ has chosen. Player two only has one item available, so they must choose that. Suppose $2$ gets item $B$ (the other case follows symmetrically). Then, $2$ has the choice to offer a trade or not. If $v_2(B) \geq v_2(A)$ then $2$ already has their preferred item and has no incentive to trade. Otherwise, suppose $v_2(B) < v_2(A)$. If $2$ offers $1$ a transfer of $t$ to trade rooms, $1$ will accept the trade if
$$v_1(B) + t \geq v_1(A).$$
By definition, $v_1(B) \sim F_1^B$ and $v_1(A) \sim F_1^A$ so
\begin{equation*}
    \begin{split}
        \P{v_1(B) + t \geq v_1(A)} &= \P{v_1(B) \geq v_1(A) - t} \\
        &= 1 - \P{v_1(B) < v_1(A) - t} \\
        &= 1 - \E{F_1^B(v_1(A) - t)} \\
        &= 1 - \int_{\underline{V}}^{\overline{V}} F_1^B(s - t)dF_1^A(s)
    \end{split}
\end{equation*}
where the expectation in line three is taken with respect to $v_1(A)$ distributed according to $F_1^A$. As such, $2$'s expected payoff offering a transfer of $t$ is\footnote{As a nice sanity check, note that if $v_2(A) = v_2(B)$ then the (unique) optimal trade is $t = 0$, which makes sense.}
$$\pi_2(t) = (v_2(A)-t) \cdot \left[1 - \int_{\underline{V}}^{\overline{V}} F_1^B(s - t)dF_1^A(s)\right] + v_2(B) \cdot \int_{\underline{V}}^{\overline{V}} F_1^B(s - t)dF_1^A(s).$$
By continuity of $F_1^B$, we get that $2$'s expected profit is continuous in $t$. Furthermore, we can without loss of generality restrict $t$ to be in 
$$\left[-|\underline{V}-\overline{V}|,|\underline{V}-\overline{V}|\right]$$
since it is impossible for a trade to be mutually beneficial outside of this range. This is a compact set, so $\pi_2$ achieves a maximum over the relevant interval and there exists some $t^*$ satisfying
$$\pi_2(t^*) = \max_{t' \in \left[-|\underline{V}-\overline{V}|,|\underline{V}-\overline{V}|\right]} \pi_2(t').$$
Before moving onto the next step of the backwards induction, we present an intuitive comparative statics result:

\begin{theorem}
\label{compstat}
    Player $2$'s optimal offer $t^*$ is weakly increasing in the potential gain from trade, $v_2(A) - v_2(B)$.
\end{theorem}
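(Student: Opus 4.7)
The plan is to reduce the theorem to a monotone comparative statics argument. First, I would introduce $\Delta := v_2(A) - v_2(B)$ and the acceptance probability
\[
q(t) \;=\; 1 - \int_{\underline{V}}^{\overline{V}} F_1^{B}(s-t)\,dF_1^{A}(s),
\]
so that player 2's expected payoff simplifies to $\pi_2(t) = v_2(B) + (\Delta - t)\,q(t)$. Maximizing $\pi_2$ in $t$ is therefore equivalent to maximizing the two-variable function $g(t,\Delta) := (\Delta - t)\,q(t)$. The key structural fact is that $q$ is weakly increasing in $t$: raising the offered transfer only shifts the argument $s-t$ downward inside the CDF $F_1^B$, and since every $F_1^B(\cdot)$ is weakly increasing, the integral $\int F_1^B(s-t)\,dF_1^A(s)$ is weakly decreasing in $t$. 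Intuitively, a higher offered payment makes acceptance weakly more likely, which is exactly the monotonicity condition we will need.

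Next I would verify that $g$ has increasing differences in $(t,\Delta)$. Take any $t' > t$ and $\Delta' > \Delta$; then
\[
g(t',\Delta') - g(t,\Delta') - g(t',\Delta) + g(t,\Delta) \;=\; (\Delta' - \Delta)\bigl(q(t') - q(t)\bigr) \;\geq\; 0,
\]
using the monotonicity of $q$ established above. This is precisely the hypothesis of Topkis's monotonicity theorem, from which it follows that $\arg\max_{t} g(t,\Delta)$ is weakly increasing in $\Delta$ in the strong set order; in particular, any selection $t^*(\Delta)$ (for example, the largest maximizer, which exists because $g$ is continuous on the compact domain $[-|\overline{V}-\underline{V}|,|\overline{V}-\underline{V}|]$ established just before the theorem) is weakly increasing in $\Delta$.

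If I did not want to cite Topkis by name, the same conclusion follows from a direct two-inequality revealed-preference argument: for $\Delta' > \Delta$, combine $g(t^*(\Delta),\Delta) \geq g(t^*(\Delta'),\Delta)$ with $g(t^*(\Delta'),\Delta') \geq g(t^*(\Delta),\Delta')$. Summing these cancels the cross-terms and yields $(\Delta'-\Delta)\bigl(q(t^*(\Delta')) - q(t^*(\Delta))\bigr) \geq 0$, so $q(t^*(\Delta')) \geq q(t^*(\Delta))$, and combined with the monotonicity of $q$ this forces $t^*(\Delta') \geq t^*(\Delta)$ on any region where $q$ is strictly increasing.

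The only real obstacle is handling non-uniqueness of the maximizer together with possible flat stretches of $q$: if $q$ is locally constant, many transfers give the same acceptance probability and revealed preference alone does not pin down which one is chosen. The clean fix is to state the conclusion in terms of the strong set order (or simply by selecting the largest optimizer, which the strict term $-t$ in $g$ makes tractable); either way the substantive content, that raising the gain from trade never pushes the optimal offer downward, holds with no further assumptions beyond those already imposed on the $F_j^i$.
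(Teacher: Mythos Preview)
Your proof is correct and follows essentially the same route as the paper: both arguments rewrite $\pi_2$ so that the parameter enters linearly, verify single-crossing/increasing differences via the monotonicity of the acceptance probability in $t$, and then invoke a Topkis/Milgrom--Shannon monotone comparative statics result. The only cosmetic difference is that the paper parametrizes by $d = v_2(B) - v_2(A) = -\Delta$ and shows \emph{decreasing} differences in $(t,d)$ before flipping the sign at the end, whereas you work directly with $\Delta$ and the acceptance probability $q(t)$, which makes the cross-difference $(\Delta'-\Delta)(q(t')-q(t))$ pop out in one line; your treatment of the weak-versus-strict issue (strong set order, or selecting the largest maximizer) is also somewhat more careful than the paper's.
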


With additional differentiability assumptions on the probability distributions involved, it may be possible to characterize the optimal trade offer $t^*$ using a first-order condition approach. However, such a computation will heavily depend on the probability distributions themselves and will be intractable beyond providing comparative statics, which we have already done.

Recall that the value of $t^*$ was defined assuming that $2$ is left with item $B$ once $1$ chooses. Furthermore, it is in terms of $v_2(A)$ and $v_2(B)$. To generalize, let $t^*(i,v_2(A),v_2(B))$ be $2$'s optimal transfer offer if they receive item $i$, value item $A$ at $v_2(A)$, and value item $B$ at $v_2(B)$ conditional on them wanting to make an offer at all.

If $1$ picks item $A$ and $2$ picks item $A$, 
\begin{enumerate}
    \item With probability 
    \begin{equation*}
        \begin{split}
            \P{v_2(B) > v_2(A)} &= \int_{\underline{V}}^{\overline{V}} F_2^A (s) dF_2^B(s)
        \end{split}
    \end{equation*}
    agent $2$ prefers $B$ to $A$ anyways and offers no trade;
    \item With complementary probability, $2$ offers $1$ a trade $t$ distributed according to $T_2^B$ where 
    $$T_2^B(s) = \P{t^*(B, v_2(A), v_2(B) < s}$$ 
    and the probability is taken with respect to $v_2(A) \sim F_2^A$ and $v_2(B) \sim F_2^B$.
\end{enumerate}
Thus, $1$'s expected utility from choosing $A$ is 
\begin{equation*}
    \begin{split}
        &\int_{\underline{V}}^{\overline{V}} F_2^A (s) dF_2^B(s) \cdot v_1(A) +\\
        &\left[1-\int_{\underline{V}}^{\overline{V}} F_2^A (s) dF_2^B(s)\right] \cdot \left[v_1(A) \cdot T_2^B(v_1(A) - v_1(B)) \right] + \\
        &\left[1-\int_{\underline{V}}^{\overline{V}} F_2^A (s) dF_2^B(s)\right] \cdot\left[ \E{v_1(B)+t|t>v_1(A) - v_1(B)}\cdot [1-T_2^B(v_1(A) - v_1(B))] \right]
    \end{split}
\end{equation*}
where the first line is the case where no offer by $2$ is made, the second line is the case where an offer is made by $2$ but is not satisfactory, and the third line is the case where an offer is made by $2$ and is accepted. By symmetry, swapping all instances of $A$ and $B$ in the above equation gives $1$'s expected utility from choosing $B$. Then, agent $1$ chooses the larger of the two values in any subgame perfect Nash equilibrium.

\section{Simulation}

To validate our theoretical results, we design and analyze a simulation of RSD with ex-post transfers in various settings. The design of the simulation is modeled after the Stanford undergraduate housing allocation problem, and we analyze the welfare generated by the novel mechanism compared to plain RSD, the distribution of welfare based on initial endowments, and the effects of transaction costs, among other things. 

The details of our simulation are as follows:

\begin{itemize}
\item We use 10,000 agents and 10,000 rooms.
\item We assign each room a normal distribution with mean and variance uniformly randomly chosen from $\mu\in[100, 10000]$ and $\sigma^2=[500, 1000]$. Agents' private valuations of a given room are randomly drawn from that room's distribution; for each agent $j$ and room $i$, define $v_j^{priv}(i)$ to be agent $j$'s private valuation of room $i$.
\item We fix a random ordering of the agents that is used as the pick order for RSD and the proposing order for ex-post pairwise transfers.
\item We allocate rooms during the RSD portion of the mechanism by providing agents 'incomplete but useful' information on others' preferences and allowing them to act strategically based on this information.
\begin{itemize}
    \item We define the public valuation for each room $i$ as $p^{pub}(i)$ to be the private valuation of the agent assigned to it after running plain RSD using the aforementioned fixed random ordering.
    \item We define each agent $j$'s augmented valuation of room $i$ to be \[v^{aug}_{j}(i)=\max\left(v^{priv}_{j}(i), \frac{v^{priv}_{j}(i)+v^{pub}(i)}{2}\right)\]
    \item We run plain RSD, but with agents acting based on their augmented valuations of rooms (as opposed to their private preferences) to generate the initial allocation of housing.
\end{itemize}
\item We then allow for pairwise transfers between agents. 
\begin{itemize}
    \item Each agent $j$ considers all rooms that they value more than their room assignment after RSD, $i_j$.
    \item For each of these more desirable rooms $i_k$ with owner $k$, the buy price is set to be the difference in agent $k$'s (the current owner's) valuations of the two rooms $p_{j,k}=v_k(i_k)-v_k(i_j)$.
    \item Agent $j$ computes their utility as $u_{j,k}=v_j(i_k)-v_j(i_j)-p_{j,k}$ and pays for the room that maximizes this utility, if at all any are positive.
    \item We assume that once an agent buys a room, they no longer participate in other pairwise transfers, i.e. they will not sell their room after they've bought it.
\end{itemize}
\end{itemize}

We use 10,000 agents and rooms since it's on the same order of magnitude of the real housing allocation problem at Stanford while still allowing for efficient simulation.

Our method of generating private valuations uses sampling from a common normal rather than a uniform since people's preferences for certain rooms tend to be highly correlated: pretty much everyone highly values a luxurious two room double in Toyon, and pretty much everyone dislikes a dingy one room double in Crothers Hall. That being said, people still have some meaningful differences in terms of how much they value a given room, and the relatively high variance of the distribution captures this element of reality.

A defining characteristic of this system is its lack of strategyproofness, and, thus, we provide agents with `incomplete but useful' information on others' preferences for them to act strategically during the RSD portion of the mechanism. We achieve this `incomplete but useful' information by effectively giving agents 'one more sample' from each room's distribution by running Plain RSD, but which is biased to account for the choosing order. We implement this feature of the simulation by performing plain RSD using "augmented preferences," as defined above. We take the maximum of the private and the average of private and public valuations since each agent is guaranteed their private valuation (there is no need for an agent to sell) and, depending on a room's resale value, might derive additional benefit from choosing to sell the room. We average private and public valuations as a simple method of incorporating uncertainty about being able to trade the room for its public value on the transfer market.

We use a pairwise trading system for transfers due to its realistic nature: practically speaking, coordinating a group 'cyclic' trade is difficult to execute, so doesn't factor importantly into our analysis of the behavior of the system (as discussed earlier). The added constraint of not allowing agents to participate in transfers once they buy a room also matches the real-world case: intuitively, the vast majority of participants in the real world participate in about one trade at most, not desiring to undertake the endeavor of 'high frequency trading on the dorm housing transfer market'.

Our simulation code can be found at \href{https://github.com/mattyding/RSD-w-transfers/tree/main}{this link}. 

\section{Welfare Analysis}
\label{sec-welfare-analysis}

We switch now to one of the major tests of the usefulness of a mechanism: its effects on the total welfare of the market. To that end, we analyze the total welfare generated by RSD with ex-post transfers through theoretical analysis and our simulation of the mechanism.

Given ideal theoretical conditions, we see that RSD with ex-post transfers performs very well when maximizing welfare.
\begin{theorem}
    \label{thm-expost-always-greater-than-rsd}
    Assuming the transfer market permits group trades, there are no transaction costs, and agents have quasilinear utility over item assignment and 'dollars,' RSD with Ex-Post transfers always generates greater total welfare than plain RSD.
\end{theorem}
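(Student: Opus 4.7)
The plan is to reduce the claim to a direct consequence of the corollary following Theorem 3.1 (the VCG-equivalence), leveraging the fact that monetary transfers wash out of the total welfare calculation under quasilinearity. The key observation is that, with quasilinear utilities and any budget-balanced transfer profile $t$, total social welfare can be written as
\[
\sum_{j \in M} u_j'(a, t) \;=\; \sum_{j \in M} \bigl(v_j(a(j)) + d_j + t(j)\bigr) \;=\; \sum_{j \in M} v_j(a(j)) \;+\; \sum_{j \in M} d_j,
\]
since $\sum_j t(j) = 0$. The endowment term $\sum_j d_j$ is a constant across mechanisms, so comparing total welfare between mechanisms reduces to comparing $\sum_j v_j(a(j))$ across the induced allocations.

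Next, I would invoke the corollary immediately following Theorem 3.1, which establishes that the allocation produced by RSD with Ex-Post Transfers coincides with the VCG allocation and hence maximizes $\sum_{j} v_j(a(j))$ over all feasible injective allocations $a: M \to N \cup \varnothing$. The hypothesis permitting group trades is exactly what makes this corollary available, because the competitive-equilibrium-with-endowments step in the mechanism can reach any Pareto-efficient allocation (as opposed to the pairwise-transfer variant, which we already saw may get stuck at suboptimal allocations). The absence of transaction costs ensures that the realized welfare equals the gross welfare $\sum_j v_j(a(j)) + \sum_j d_j$ without deductions.

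Finally, let $a_{\mathrm{RSD}}$ denote the allocation produced by plain RSD and $a_{\mathrm{EPT}}$ the allocation produced by RSD with Ex-Post Transfers. By the preceding paragraph,
\[
\sum_{j \in M} v_j\bigl(a_{\mathrm{EPT}}(j)\bigr) \;=\; \max_{a \in A} \sum_{j \in M} v_j(a(j)) \;\geq\; \sum_{j \in M} v_j\bigl(a_{\mathrm{RSD}}(j)\bigr),
\]
and adding the common constant $\sum_j d_j$ to both sides gives the welfare comparison. The inequality is weak, with equality exactly when plain RSD happens to land on a welfare-maximizing allocation (e.g.\ in the identical-preferences case of the earlier special-case theorem).

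The only subtlety — and what I would flag as the main conceptual obstacle rather than a computational one — is the justification that a competitive equilibrium with endowments actually maximizes $\sum_j v_j(a(j))$ rather than merely achieving Pareto efficiency. Pareto efficiency under quasilinear utilities is equivalent to maximizing the sum of valuations, because any allocation with a strictly greater sum could be reached from a Pareto-efficient one by redistributing the surplus via transfers to make every agent weakly better off and at least one strictly better off, contradicting Pareto efficiency. This equivalence (already implicit in the VCG-equivalence corollary) is what upgrades Theorem 3.1 from an efficiency statement into the welfare-maximization statement needed here.
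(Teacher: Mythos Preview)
Your proposal is correct and follows essentially the same approach as the paper: both reduce the claim to the earlier observation (the corollary after Theorem~4.1) that RSD with Ex-Post Transfers yields the allocation maximizing $\sum_j v_j(a(j))$, from which the welfare comparison with plain RSD is immediate. Your version is simply more explicit---spelling out why transfers wash out under budget balance and why Pareto efficiency coincides with welfare maximization under quasilinearity---whereas the paper's proof is a two-sentence invocation of ``RSD with ex-post transfers maximizes total social welfare, so plain RSD cannot do better.''
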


\begin{proof}
    As proven earlier, we know RSD with ex-post transfers maximizes total social welfare. Thus, the total welfare generated by plain RSD must be less than or equal to the total welfare attained by RSD with ex-post transfers, holding constant the number of agents, the items allocated, agent preferences, and choosing order.
\end{proof}

Hence, we see that we get a (idealized) theoretical guarantee of the performance of RSD with ex-post transfers being better than plain RSD.

In order to analyze the mechanism under more realistic conditions, we also ran approximately 20 independent simulations to directly compute the increase in welfare from RSD with ex-post transfers relative to plain RSD, with each agent's final welfare defined as the sum of their remaining budget and their valuation of their room assignment by the end of the mechanism. All agents were given an initial endowment of $\$10,000$. The results are shown in figure 1.

\begin{figure}[h!]
    \centering
    \includegraphics[width=0.65\textwidth]{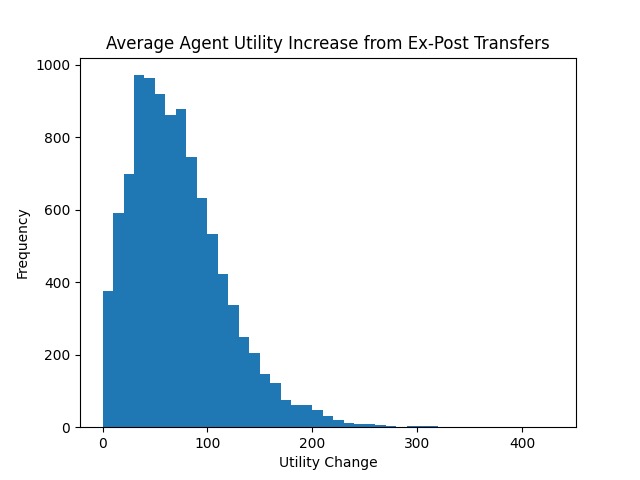}
    \caption{Welfare Increase of RSD with Ex-Post Transfers over Plain RSD}
    \label{fig:welfare}
\end{figure}

Corroborating Theorem \ref{thm-expost-always-greater-than-rsd} in more realistic conditions, no agent had lower welfare in the ex-post transfer case than in plain RSD. We also find that individual agent welfare increases follow a roughly left-skewed normal distribution, which may arise from agent valuations themselves being drawn from a normal distribution.

Though these results are reassuring, the assumption of equal budgets for all participants is likely far from reality and may lead to an inaccurate view of the mechanism's efficiency. Thus, in order to account for the influence of socio-economic inequality on the model's functioning, a likely element of its real world application, we once again simulated the total and individual welfares generated by RSD with ex-post transfers relative to plain RSD, but this time with agent budgets following a power law distribution, which accords with real world observations of income distribution, such as the well known observation of the Pareto distribution of incomes within a nation. The specific wealth distribution used in our experiment is we created 1000 income groups, where the initial endowment of $\texttt{group}_i = 1.01^i$, and each income group had 10 agents assigned to it. The results of this experiment are shown in Figure 2.

\begin{figure}[h!]
    \centering
    \includegraphics[width=0.49\textwidth]{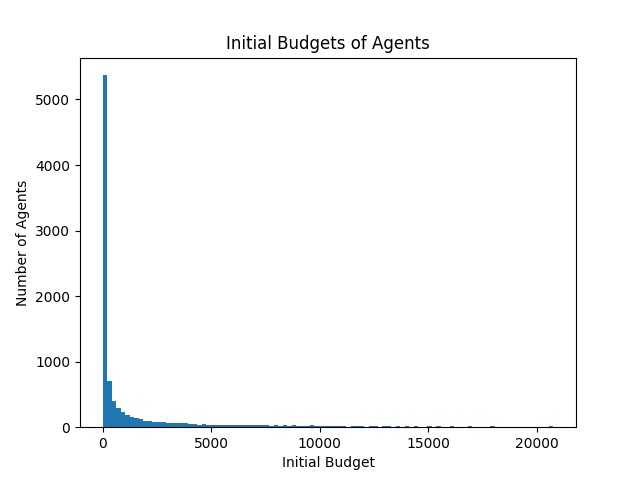}
    \includegraphics[width=0.49\textwidth]{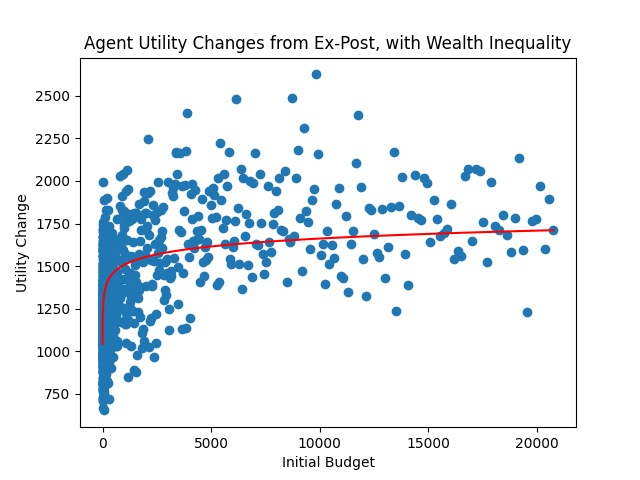}
    \caption{Welfare Analysis with Wealth Inequality}
    \label{fig:wealth-inequality}
\end{figure}

In this case we found the individual change in utility scaled logarithmically with the agent's initial budget. Thus, the benefit that higher resource participants receive quickly diminishes in importance as the agent becomes wealthier, rendering them similarly well off to those with `middle class' endowments.

\section{Transaction Costs}

An important consideration when analyzing mechanisms with transfers is the cost of learning about and executing trades in a real world setting with imperfect and perhaps costly information. Hence, we test our model's practically by including transaction costs to our RSD with Ex-Post Transfers model:

\begin{enumerate}
    \item Let $\tau \in \mathbf{R}$, the amount of the transaction cost;
    \item All transactions face a `tax' of $\tau$ either proportional to the price of a transaction or set to a fixed sum. This tax is subtracted from the total utility an agent receives from their outcome in the mechanism. Only one side of the trade incurs this cost (i.e. the seller's side).
    \item All agents know exactly what the level and nature of the transaction cost is and incorporate this information in forming their strategies.
\end{enumerate}

With this definition, we see that plain RSD is equivalent to RSD with Transfers with a very high $\tau$, e.g. $\tau > \max_{j \in M, i \in N} d_j + v_j(i)$. In the aforementioned case, any transaction will, unambiguously, leave both agents engaging in it strictly worse off. Since no agents will engage in trade, the strategic situation becomes equivalent to plain RSD. Another immediate result is that even in the presence of transaction costs, allowing for trades from any fixed endowment leads to a Pareto-improvement, as any trades that happen must still be mutually beneficial.

Continuing with the interpretation of transaction costs as taxes, a canonical result in the analysis of taxation is that levying a tax on buyers or sellers does not change the outcome. Intuitively, that result is true as the market price would shift to account for the tax, and that shift depends only on the price elasticity of the good being taxed. In our current setting, we have the following:

\begin{theorem}
    Consider two agents $j$ and $j'$. Suppose agent $j$ can choose between keeping item $i$ or choosing item $\hat{i}$ and trading with agent $j'$ to receive their item of $i'$. Then, the trade happens if and only if 
    $$v_j(i') + v_{j'}(\hat{i}) - v_{j'}(i') - \tau > v_j(i)$$
    regardless of how the transaction cost is split between agents $j$ and $j'$.
\end{theorem}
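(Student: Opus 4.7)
The plan is to treat this as a standard tax-incidence argument: parametrize the share of the transaction cost borne by each party, write down each agent's individual rationality (IR) constraint for agreeing to the trade, and observe that the feasibility interval for the transfer $t$ collapses into a condition in which the share parameter cancels out.

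First, I would let $\alpha \in [0,1]$ denote the fraction of $\tau$ paid by $j$, so that $j'$ pays the remaining $(1-\alpha)\tau$. Let $t$ denote the transfer that $j$ pays to $j'$ as part of the swap. Agent $j$'s payoff from trading is $v_j(i') - t - \alpha\tau$ while the outside option (keep $i$, no trade) yields $v_j(i)$; thus $j$ is willing to trade iff
\[ t \;\leq\; v_j(i') - v_j(i) - \alpha\tau. \]
Symmetrically, agent $j'$'s payoff from accepting is $v_{j'}(\hat{i}) + t - (1-\alpha)\tau$ versus an outside option of $v_{j'}(i')$, giving the acceptance condition
\[ t \;\geq\; v_{j'}(i') - v_{j'}(\hat{i}) + (1-\alpha)\tau. \]

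Next, I would argue that the trade happens (strictly) iff these two intervals overlap strictly, i.e. there exists a $t$ satisfying both. Chaining the inequalities, such a $t$ exists iff
\[ v_{j'}(i') - v_{j'}(\hat{i}) + (1-\alpha)\tau \;<\; v_j(i') - v_j(i) - \alpha\tau. \]
The key step is now a one-line algebraic cancellation: moving the $\alpha\tau$ term to the left combines $(1-\alpha)\tau + \alpha\tau = \tau$, eliminating $\alpha$ entirely. Rearranging gives precisely
\[ v_j(i') + v_{j'}(\hat{i}) - v_{j'}(i') - \tau \;>\; v_j(i), \]
which is the stated condition and does not depend on $\alpha$.

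There is no real obstacle here beyond bookkeeping; the economic content is entirely in the observation that $\tau$ appears linearly in both IR constraints with coefficients summing to one, so it acts as a lump-sum reduction of the joint surplus irrespective of its legal incidence. To round out the proof I would briefly note the boundary case $\tau = 0$ reproduces Lemma \ref{trade1}'s characterization of mutually beneficial trades, and observe that when the strict inequality holds the set of feasible $t$ is a nonempty open interval, so both agents can be made strictly better off regardless of how the gains from trade are divided.
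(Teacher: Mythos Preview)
Your proposal is correct and follows essentially the same argument as the paper: parametrize the split of $\tau$ between the two agents, write each agent's participation constraint, and observe that the share parameter cancels when the two constraints are combined. The only cosmetic difference is that the paper uses an absolute amount $c$ (with $j'$ paying $\tau-c$) and substitutes the minimal acceptable transfer into $j$'s constraint, whereas you use a fractional share $\alpha$ and phrase the existence of a feasible $t$ as interval overlap; the algebra and the conclusion are identical.
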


\begin{proof}
    Suppose agent $j$ pays $c$ of the transaction cost and agent $j'$ then pays $\tau-c$ of the cost. Let $t$ be the transfer from agent $j$ to agent $j'$. For agent $j'$ to accept the trade, it must be that
    $$v_{j'}(\hat{i}) - (\tau - c) + t \geq v_{j'}(i') \implies t \geq v_{j'}(i') + (\tau-c) - v_{j'}(\hat{i}).$$
    Then, agent $j$ would prefer to offer a trade if and only if there is some $t$ that satisfies agent $j'$'s incentive compatibility constraint and also satisfies
    $$v_j(i') - c - t > v_j(i).$$
    Substituting in the minimal transfer to make agent $j'$ willing to trade gives that
    $$v_j(i') - c - [v_{j'}(i') + (\tau-c) - v_{j'}(\hat{i})] = v_j(i') +v_{j'}(\hat{i})- v_{j'}(i')- [\tau - c+c] = v_j(i') +v_{j'}(\hat{i})- v_{j'}(i')- \tau > v_j(i)$$
    as a necessary and sufficient condition for a trade to happen. This is exactly the condition outlined in the Theorem and does not depend on $c$.
\end{proof}

Using this, we get the following:

\begin{theorem}
    For sufficiently small $\tau > 0$, the allocation in the presence of transaction costs is the same as the allocation without transaction costs.
\end{theorem}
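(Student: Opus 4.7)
The plan is to exploit the finiteness of the strategic environment: there are finitely many agents, items, and hence finitely many tuples of valuations, so the game induced by RSD with ex-post transfers has a finite game tree with payoffs drawn from a finite set of real numbers. I will run the argument by backward induction on this game tree, showing that for $\tau$ small enough, every strategic comparison that determines an agent's action comes out the same way as it would with $\tau = 0$.

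First I would collect the set $\mathcal{I}$ of all inequalities that appear at any decision node in the $\tau = 0$ game tree: at a trade node, this is the condition from the preceding theorem, $v_j(i') + v_{j'}(\hat i) - v_{j'}(i') - \tau > v_j(i)$, and at an RSD pick node, it is a comparison of the two continuation payoffs (each of which is a sum of finitely many valuations, possibly adjusted by one trade surplus) that results from each available item. Because there are only finitely many agents, items, and subgames, $\mathcal{I}$ is finite, and each element has the form $L > R + c\tau$ for some integer $c \in \{-1,0,1\}$ and $L, R$ that are fixed linear combinations of valuations (independent of $\tau$).

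Next I would define
\[
\delta \;=\; \min_{(L,R,c)\in\mathcal{I}} \bigl|L - R\bigr|,
\]
restricted to those inequalities where $L \neq R$. Since $\mathcal{I}$ is finite, $\delta > 0$. For any $\tau \in (0,\delta)$, each strict inequality in $\mathcal{I}$ has the same truth value as at $\tau = 0$: if $L > R$ at $\tau = 0$ the slack is at least $\delta > \tau$, and if $L < R$ it remains so since the perturbation is bounded by $\tau$. Then a straightforward backward induction on the RSD order shows that every agent's equilibrium action, and hence the final allocation, coincides with the $\tau = 0$ case.

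The main subtlety, which I would be careful to flag, is the tie-breaking case $L = R$: strategic indifferences at $\tau = 0$ could in principle tip one way when $\tau > 0$. The cleanest way to handle this is either to assume generic valuations (so no such indifference arises among the finitely many elements of $\mathcal{I}$), or to fix a tie-breaking rule that is consistent in the direction of ``no trade when indifferent,'' which is preserved under positive $\tau$. With that convention in place, the backward-induction step goes through verbatim, and the allocations coincide for every $\tau \in (0,\delta)$.
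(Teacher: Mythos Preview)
Your argument is correct and rests on the same finiteness idea as the paper, but you build considerably more machinery than the paper does. The paper's proof is a two-line version of your argument restricted to the trade nodes only: it lets $\Gamma$ be the set of trades that occur at $\tau=0$, observes that each such trade satisfies the strict inequality $v_j(i')+v_{j'}(\hat i)-v_{j'}(i')>v_j(i)$ with some slack $\epsilon_\gamma>0$, sets $\gamma^*=\min_{\gamma\in\Gamma}\epsilon_\gamma>0$ by finiteness, and concludes that any $\tau\le\gamma^*$ leaves every trade intact. The reverse direction (no new trades appear) is immediate since raising $\tau$ only tightens the trade condition. The paper does not walk through the RSD pick nodes or do an explicit backward induction; it treats the endowment stage as fixed and lets the trade-condition theorem do all the work.

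What you gain by expanding to the full game tree is robustness: if the RSD choices themselves are strategic and depend on anticipated trades (as in the interim-transfers mechanism), your argument still covers them, whereas the paper's short proof is silent on that. What you pay is the tie-breaking caveat you correctly flag, and one small imprecision: the claim that every comparison has the form $L>R+c\tau$ with $c\in\{-1,0,1\}$ need not hold literally if a continuation path can involve more than one $\tau$-bearing trade. The argument survives with $|c|$ bounded by the number of agents, so you should state it that way rather than pinning $c$ to $\{-1,0,1\}$.
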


\begin{proof}
    Without transaction costs, trades happen if 
    $$v_j(i') + v_{j'}(\hat{i}) - v_{j'}(i') > v_j(i).$$
    As the inequality is strict, for each trade $\gamma \in \Gamma$ there exists some $\epsilon_\gamma > 0$ such that 
    $$v_j(i') + v_{j'}(\hat{i}) - v_{j'}(i') - \epsilon_\gamma > v_j(i).$$
    As there are a finite number of trades,
    $$\gamma^* = \min_{\gamma \in \Gamma} \epsilon_\gamma$$
    exists and is strictly positive, so taking $\tau \leq \gamma^*$ still makes every trade go through.
\end{proof}

We add to our theoretical analysis by incorporating the feature of transaction costs into a simulated experiment. We implement this feature in the simulation in two ways: in one instance, as a fixed sum that an agent must pay if they want to engage in a pairwise trade, and in a second instance as a sum proportional to the dollar amount of the transaction that must be paid in order to engage in the trade. The results of this experiment are shown in figure 3 and 4.

\begin{figure}[h!]
    \centering
    \includegraphics[width=0.6\textwidth]{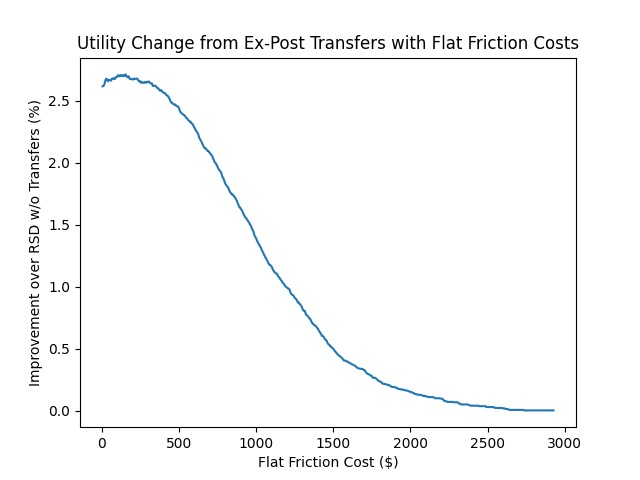}
    \caption{Utility Increases over Plain RSD for Increasing Constant Transaction Costs}
    \label{fig:friction-costs-constant}
\end{figure}

\begin{figure}[h!]
    \centering
    \includegraphics[width=0.6\textwidth]{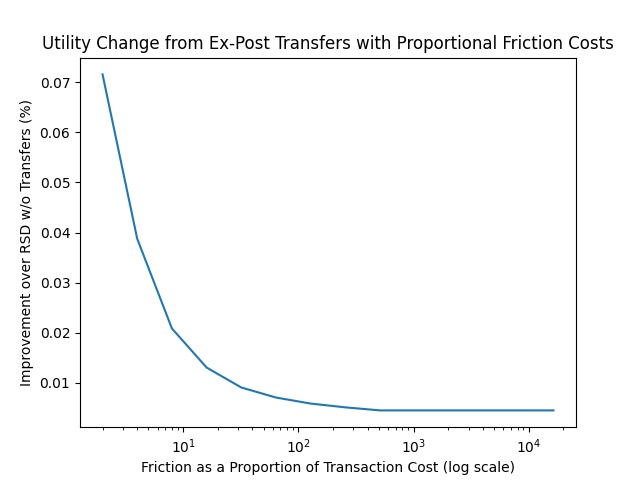}
    \caption{Utility Increases over Plain RSD for Increasing Proportional Transaction Costs}
    \label{fig:friction-costs-proportional}
\end{figure}

Our simulated results mirror closely our theoretical analysis: we see that once transactions costs reach a prohibitively high threshold, almost all trade ceases to be mutually beneficial, and the mechanism devolves to become equivalent to plain RSD. Furthermore, we also see that for sufficiently low transaction costs, mechanism functioning is essentially identical to a $\tau = 0$ scenario. Furthermore, the simulation results clarify how important low transaction costs are for the efficient functioning of RSD with ex-post transfers.

\section{Discussion and Real World Considerations}

Despite the interesting theoretical properties and abstract experimental results we can find about RSD with ex post transfers, ultimately, this system is meant to function in real world social environments. Thus, we use this section to describe important real-world considerations for the use of our mechanism to solve meaningful assignment problems. We return here to the motivating example and application of college housing assignments.

The adoption of RSD with ex post transfers may stir up controversy for a number of reasons.

At first glance, it can appear that administrators are 'making college even more expensive than it already is,' which could lead to push back from the press and the community. That being said, a deeper analysis, based on our theoretical and experimental results, reveals that this is mostly a 'PR problem:' our theoretical analysis shows how RSD with ex post transfers (with group trades) leads to higher welfare than current plain RSD mechanisms, and our simulation results show an increase in total welfare from use of RSD with ex post pairwise transfers in a more realistic setting.

Another, perhaps more salient concern is that of the advantage of high-resourced participants to outbid others when trying to purchase rooms they desire. Although these trades are Pareto-improvements (as shown earlier), this could promote a sentiment of the system being 'unfair' and biased against those without lots of money to spend in the transfer market, leading to backlash and dissatisfaction from the community that isn't captured by the utility functions we use in our model.

Furthermore, since the most popular and desired rooms tend to be located in the same area and high resource participants are generally willing and able to buy those rooms, this may lead to a kind of socio-economic segregation emerging in student housing patterns. This could lead to meaningful but hard to measure detriments to students' college experiences, like the loss of a chance to interact with and learn from those who aren't in one's own socio-economic group and, thus, people one would otherwise be less likely to interact with. Hence, the concern about high-resourced participants (or, generally, large inequality in initial resources of participants) could lead to meaningful ramifications that are hard to predict and hard to measure. That being said, these concerns are mitigated by our simulation results, which demonstrate that the benefit conferred upon higher resourced participants quickly hits sharply diminishing returns, leaving them similarly advantaged to most others in the market except those with exceptionally low initial endowments.

Additionally, the limited rationality of participants may lead to sub-optimal outcomes from using RSD with ex post transfers.

As both personal experience and behavioral economics bear out, people not only act irrationally but, to borrow a phrase, are 'predictably irrational' (\cite{ariely_2009_predictably}). Specifically, people tend to underweight the importance of the future and overweight the importance of the present, otherwise known as hyperbolic discounting or present bias ( \cite{chakraborty_2019_present}).

This could prove to be an important factor in how people behave in the ex-post transfer market, by under-weighting the importance of getting a good room, whose consequences span a year and are rather diffuse, and over-weighting how much (more) money they can leave the mechanism with, which is immediate and straightforward in its benefit. This is a unique concern of RSD with transfers: plain RSD has no 'temptation' of making money on transfers to cause myopia on the part of participants when choosing during RSD.

Reflecting on anecdotal evidence and personal experience, we observe that many undergrads will vehemently fight to secure rooms  they desire, communicating the seriousness and clarity that participants typically approach the room assignment process with. Thus, the concern of limited rationality seems itself limited in its plausibility and importance.

Finally, the issue of real world transaction costs in the use of the mechanism pose an important but tractable problem. From our simulation results, we see that once transaction costs reach a sufficiently low threshold, the mechanism is able to function efficiently and realize meaningful gains over plain RSD. Though transaction costs, like finding a person willing to sell their room and agreeing on secure way to transact, could be very high, these problems are also effectively addressed by support from a centralized mechanism. An officially-sanctioned online marketplace, where agents can offer their rooms for sale, request to buy others' rooms for a proposed price, and transact with a verified online mechanism could dramatically cut transaction costs and allow efficient functioning of the mechanism.

\section{Conclusion}

Given most analyses of assignment problems with RSD use agent utility functions that only vary with the item assigned, we show that RSD is no longer Pareto optimal when we consider a more realistic quasilinear utility model for agent welfare. To address this revised problem, we introduce the mechanism of RSD with ex-post transfers.

We find that RSD with ex-post transfers, theoretically with no constraints, leads to the socially optimal allocation of items, but that this property breaks down when limited to the realistic scenario of pairwise trading. That being said, even when limiting the transfer market to pairwise trading, we find in our simulations that our mechanism still outperforms plain RSD by a non-trivial margin in terms of total welfare generated. 

Importantly, we demonstrate RSD with ex-post transfers is not strategyproof, and analyze the properties of the strategic situation of the mechanism. In particular, agents must incorporate their beliefs about what will eventually happen in the transfers phase to pick the item that gives them the highest expected value. However, there can be significant uncertainty over others' actions and willingness to trade, making it extremely difficult for agents to successfully strategize when facing RSD with ex-post transfers. 

To overcome this issue, we introduce RSD with interim transfers. This mechanism has two clear improvements over RSD with ex-post transfers. First, restricting trades to being offered one at a time reduces strategic uncertainty by removing the issue of whether or not an agents will end up choosing a better trade. Second, endogenizing trades as part of the mechanism instead of leaving it up to a messy aftermarket may reduce search frictions, decreasing transaction costs. Unfortunately, as our analysis of the case in which there are two agents and items, optimal strategizing in RSD with interim transfers is generally complicated and requires further investigation. Some questions for future research include:
\begin{enumerate}
    \item What are some comparative statics results that characterize behavior?
    \item Are there any conditions that can be added to induce some notion of ``straightforward'' strategies?
    \item What happens if valuations are correlated or if all agents have a common prior that values are drawn from?
    \item If agent valuations can be pinned down by a one-dimensional type as a measure of how much they care about some item characteristic, can envelope-theorem type analyses pin down equilibrium expected utility?
\end{enumerate}

A key consideration in the implementation of the transfer period is the transaction costs associated with engaging in trade in the market. We find in our simulations that transaction costs significantly effect transfer market efficiency, but that sufficiently low transaction costs are tractable with the use of a centralized marketplace mechanism, leading to RSD with ex-post transfers functioning efficiently.

Finally, we find that only a few real world considerations are salient in the implementation of our mechanism. For example, though  resource inequality of participants can create unintended negative consequences, we find the direct influence of wealth inequality to be quite limited in terms of agents' power in and success from participating in the mechanism.

Overall, we find that RSD with ex-post transfers is a relatively simple augmentation to the popular RSD mechanism which neatly addresses the assignment problem of agents with quasilinear utilities, exhibiting interesting strategic characteristics, theoretical and simulated efficiency, and real-world promise.

\newpage 

\bibliography{cites.bib}

@article{sandberg_1979_uniqueness,
  author = {Sandberg, I.W.},
  month = {01},
  pages = {99-102},
  title = {Uniqueness of general economic equilibrium},
  doi = {10.1016/0165-1765(79)90215-5},
  urldate = {2020-02-24},
  volume = {4},
  year = {1979},
  journal = {Economics Letters}
}

@article{rubinstein_1982_perfect,
  author = {Rubinstein, Ariel},
  month = {01},
  pages = {97},
  title = {Perfect Equilibrium in a Bargaining Model},
  doi = {10.2307/1912531},
  url = {https://bbs.pku.edu.cn/attach/13/e5/13e5e3eb96a04c22/rubinstein\%EF\%BC\%8C1982.pdf},
  volume = {50},
  year = {1982},
  journal = {Econometrica}
}

@book{ariely_2009_predictably,
  author = {Ariely, Dan},
  publisher = {Harper Perennial},
  title = {Predictably irrational : the hidden forces that shape our decisions},
  year = {2009}
}

@article{chakraborty_2019_present,
  author = {Chakraborty, Anujit},
  title = {Present Bias},
  doi = {10.2139/ssrn.3474231},
  year = {2019},
  journal = {SSRN Electronic Journal}
}

@article{bade_2015_serial,
  author = {Bade, Sophie},
  month = {05},
  pages = {385-410},
  title = {Serial dictatorship: The unique optimal allocation rule when information is endogenous},
  doi = {10.3982/te1335},
  urldate = {2019-07-17},
  volume = {10},
  year = {2015},
  journal = {Theoretical Economics}
}

@article{noda_2022_strategic,
  author = {Noda, Shunya},
  month = {05},
  pages = {115-125},
  title = {Strategic experimentation with random serial dictatorship},
  doi = {10.1016/j.geb.2022.01.019},
  urldate = {2022-03-22},
  volume = {133},
  year = {2022},
  journal = {Games and Economic Behavior}
}

@article{abdulkadiroglu_1998_random,
  author = {Abdulkadiroglu, Atila and Sonmez, Tayfun},
  month = {05},
  pages = {689},
  title = {Random Serial Dictatorship and the Core from Random Endowments in House Allocation Problems},
  doi = {10.2307/2998580},
  urldate = {2019-10-03},
  volume = {66},
  year = {1998},
  journal = {Econometrica}
}

@article{bade_2020_random,
  author = {Bade, Sophie},
  month = {02},
  pages = {353-368},
  title = {Random Serial Dictatorship: The One and Only},
  doi = {10.1287/moor.2019.0987},
  urldate = {2023-02-19},
  volume = {45},
  year = {2020},
  journal = {Mathematics of Operations Research}
}

@article{han_2016_on,
  author = {Han, Xiang},
  month = {08},
  pages = {168-171},
  title = {On the consistency of random serial dictatorship},
  doi = {10.1016/j.econlet.2016.06.005},
  urldate = {2021-10-03},
  volume = {145},
  year = {2016},
  journal = {Economics Letters}
}

@article{manea_2007_random,
  author = {Manea, Mihai},
  month = {05},
  pages = {489-496},
  title = {Random serial dictatorship and ordinally efficient contracts},
  doi = {10.1007/s00182-007-0088-z},
  urldate = {2022-02-05},
  volume = {36},
  year = {2007},
  journal = {International Journal of Game Theory}
}

@article{klaus_2019_serial,
  author = {Klaus, Bettina and Nichifor, Alexandru},
  month = {08},
  title = {Serial dictatorship mechanisms with reservation prices},
  doi = {10.1007/s00199-019-01223-6},
  urldate = {2020-05-23},
  year = {2019},
  journal = {Economic Theory}
}

@article{hosseini_2015_on,
  author = {Hosseini, Hadi and Larson, Kate and Cohen, Robin},
  month = {03},
  title = {On Manipulablity of Random Serial Dictatorship in Sequential Matching with Dynamic Preferences},
  doi = {10.1609/aaai.v29i1.9744},
  urldate = {2023-03-03},
  volume = {29},
  year = {2015},
  journal = {Proceedings of the AAAI Conference on Artificial Intelligence}
}

@article{manea_2009_asymptotic,
  author = {Manea, Mihai },
  title = {Asymptotic ordinal inefficiency of random serial dictatorship},
  year = {2009},
  journal = { Theoretical Economics}
}

@article{milgrom_1994_monotone,
  author = {Milgrom, Paul and Shannon, Chris},
  month = {01},
  pages = {157},
  title = {Monotone Comparative Statics},
  doi = {10.2307/2951479},
  urldate = {2019-09-20},
  volume = {62},
  year = {1994},
  journal = {Econometrica}
}

\newpage

\appendix
\begin{center}
    \textbf{\Large Appendices}
\end{center}

\section{Model Cards}

For the sake of clarity, we summarise the range of models we analyse here.

Our baseline model is Random Serial Dictatorship.

\begin{mech}{(Random) Serial Dictatorship}
    Given a set of items $N$, a set of agents $M$, and each agent (privately) has preferences represented by $u_j$, do the following:
    \begin{enumerate}
        \setlength\itemsep{0.5em}
        \item Order the agents $M$ based in some arbitrary order;
        \item For each agent $j$ in $M$:
        \begin{enumerate}
            \item Ask agent $j$ to choose $i_j \in I$ maximizing their utility: $i_j \in \arg\max u_j(i,d_j)$;
            \item Set $a(j) = i_j$;
            \item Remove $i_j$ from $I$.
        \end{enumerate}
        \item Repeat until there are no agents or items left and set $a(j) = \varnothing$ for any remaining agents;
        \item Set $t(j) = 0$ for all agents.
    \end{enumerate}
    \vspace{0.5em}
    If the order in step one is by random, we call the mechanism Random Serial Dictatorship (RSD).
\end{mech}

Our key contribution to the literature comes in the next model, which is RSD with ex-post transfers.

\begin{mech}{(Random) Serial Dictatorship with Ex-Post Transfers}
    Given a set of items $N$, a set of agents $M$, and each agent (privately) has preferences represented by $u_j$, do the following:
    \begin{enumerate}
    \setlength\itemsep{0.5em}
        \item Order the agents $M$ based in some arbitrary order and set $I_1 = I$;
        \item For each agent $j$ in $M$:
        \begin{enumerate}
            \item Ask agent $j$ to choose $i_j \in I_j$;
            \item Set $e(j) = i_j$;
            \item Set $I_{j+1} = I_j \setminus \{i_j\}$.
        \end{enumerate}
        \item Repeat until there are no agents or items left and set $a(j) = \varnothing$ for any remaining agents;
        \item Set $t(j) = 0$ for all agents;
        \item Allow agents to trade until general equilibrium is reached. If $p:N \to \mathbb{R}$ maps items to their price in general equilibrium and $a$ is the allocation of goods in general equilibrium, set $t(j) = p(e(j)) - p(a(j))$ to be the value of agent $j$'s endowment minus the cost of agent $j$'s house in equilibrium.
    \end{enumerate}
\end{mech}

Because in practice, we do not see ex post transfers occur in general equilibrium, our third model is one which looks only at pairwise transfers after RSD.

\begin{mech}{(Random) Serial Dictatorship with Ex-Post Pairwise Transfers}
    Given a set of items $N$, a set of agents $M$, and each agent (privately) has preferences represented by $u_j$, do the following:
    \vspace{0.5em}
    \begin{enumerate}
    \setlength\itemsep{0.5em}
        \item Order the agents $M$ based in some arbitrary order and set $I_1 = I$;
        \item For each agent $j$ in $M$:
        \begin{enumerate}
            \item Ask agent $j$ to choose $i_j \in I_j$;
            \item Set $e(j) = i_j$;
            \item Set $I_{j+1} = I_j \setminus \{i_j\}$.
        \end{enumerate}
        \item Repeat until there are no agents or items left and set $a(j) = \varnothing$ for any remaining agents;
        \item Set $t(j) = 0$ for all agents;
        \item Allow agents to trade. If $k$ is willing to pay $p$ to trade with $j$ then set:
         \begin{itemize}
            \item $a'(k) = a(j)$ and $a'(j) = a(k)$ where $a'$ is the new allocation after the trade;
            \item $t'(k) = t(i) - p$ and $t'(j) = t(k) + p$ where $t'$ is the new transfer after the trade.
        \end{itemize}
    \end{enumerate}
\end{mech}

Finally, we consider what happens when we do transfers during the RSD process, which is a useful model for when the RSD process happens over a sufficiently long span of time as to allow trades during its occurrence.

\begin{mech}{(Random) Serial Dictatorship with Interim Transfers}
    Given a set of items $N$, a set of agents $M$, and each agent (privately) has preferences represented by $u_j$, do the following:
    \vspace{0.5em}
    \begin{enumerate}
    \setlength\itemsep{0.5em}
       \item Order the agents $M$ based in some arbitrary order, set $t(j) = 0$ for all agents, and set $I_1 = I$;
        \item For each agent $j$ in $M$:
        \begin{enumerate}
            \item Ask agent $j$ to choose $i_j \in I_j$;
            \item Set $a(j) = i_j$;
            \item Set $I_{j+1} = I_j \setminus \{i_j\}$;
            \item Allow agent $j$ to offer a trade with any agent $j' < j$ at a price of $t$;
            \begin{itemize}
                \item If the trade is accepted, swap $a(j)$ and $a(j')$ and set $t(j) = t(j)-t, t(j') = t(j')+t$;
                \item If the trade is not accepted or agent $j$ does not offer a trade, do nothing.
            \end{itemize}
        \end{enumerate}
        \item Repeat until there are no agents or items left and set $a(j) = \varnothing$ for any remaining agents;
    \end{enumerate}
\end{mech}

\newpage

\section{Omitted Proofs}

\noindent \textbf{PROOF OF LEMMA \ref{trade1}:}

\begin{proof}
    Suppose $v_j(a(k)) + v_k(a(j)) > v_j(a(j)) + v_k(a(k))$. Then, 
    $$v_j(a(k)) - v_j(a(j)) > v_k(a(k)) - v_k(a(j))$$
    so there exists some $\epsilon > 0$ such that 
    $$v_j(a(k)) - v_j(a(j)) = v_k(a(k)) - v_k(a(j)) + \epsilon$$
    or equivalently, 
    $$\epsilon = \left[v_j(a(k)) + v_k(a(j))\right] - \left[v_j(a(j)) + v_k(a(k))\right].$$
    Next, set $t(j) = -\epsilon/2$ and $t(k) = \epsilon/2$ so
    \begin{equation*}
        \begin{split}
            v_j(a(k)) + t(j) - v_j(a(j)) &= v_j(a(k)) - \frac{\left[v_j(a(k)) + v_k(a(j))\right] - \left[v_j(a(j)) + v_k(a(k))\right]}{2} - v_j(a(j)) \\
            &= \frac{2v_j(a(k)) -\left( \left[v_j(a(k)) + v_k(a(j))\right] - \left[v_j(a(j)) + v_k(a(k))\right] \right)- 2v_j(a(j))}{2}\\
            &= \frac{[v_j(a(k))+v_k(a(j))] - [v_j(a(j))+v_k(a(k))]}{2}\\
            &> 0
        \end{split}
    \end{equation*}
    so $v_j(a(k)) + t(j) > v_j(a(j))$ and similar reasoning shows that $v_k(a(j)) + t(k) > v_k(a(k))$. Thus, both agents are better off and a trade is possible. 

    Conversely, suppose $v_j(a(k)) + v_k(a(j)) \leq v_j(a(j)) + v_k(a(k))$. Towards a contradiction, suppose there is some $p$ such that setting $t(j) = p, t(k) = -p$ makes $v_k(a(j)) + t(k) > v_k(a(k))$ and $v_j(a(k)) +t(j) > v_j(a(j))$. However, this implies
    $$v_k(a(j)) + t(k) + v_j(a(k)) +t(j) = v_k(a(j)) -p + v_j(a(k)) + p  = v_k(a(j)) +  v_j(a(k)) > v_j(a(j)) + v_k(a(k))$$
    which is a contradiction.
\end{proof}

\noindent \textbf{PROOF OF THEOREM \ref{interim}}
\begin{proof}
    We will prove this by induction on $j$ where $j$ ranges from $1$ to $m$. In particular, we will show for all $j$, we have that the allocation function restricted to the agents that have already chosen $a|_j: \{1,2,...,j\} \to I$ maximizes
    $$\sum_{j' = 1}^j v_{j'}(a|_j(j')).$$
    Going forward, we will suppress the restriction and just write $a$ instead of $a|_j$ with the understanding that $a$ has not yet been defined on the full domain.

    The base case. Suppose $j = 1$ and only the first agent has chose. Clearly, they cannot offer any trades since there is no one to trade with so $1$'s best option is to just choose their favorite item. As such, $\sum_{j' = 1}^1 v_{j'}(a(j')) = v_1(a(1))$ is maximized.

    The inductive step. Suppose that after agent $j$'s choice, $a$ maximizes
    $$\sum_{j' = 1}^j v_{j'}(a(j')).$$
    Towards a contradiction, suppose that after agent $j+1$'s choice, $a$ no longer maximizes
    $$\sum_{j' = 1}^{j+1} v_{j'}(a(j')).$$
    There are two possible cases: either agent $j+1$ made a trade or did not.
    \begin{enumerate}
        \item If agent $j+1$ did not make a trade, then we can write
        $$v_{j+1}(a(j+1)) = \sum_{j' = 1}^{j+1} v_{j'}(a(j')) - \sum_{j' = 1}^j v_{j'}(a(j')).$$
        As $\sum_{j' = 1}^j v_{j'}(a(j'))$ is independent of $a(j+1)$, we have that $\sum_{j' = 1}^{j+1} v_{j'}(a(j'))$ not being maximized implies that there exists some $a'$ such that $a(j') = a'(j')$ for $j' \leq j$ and $a(j+1) \neq a'(j+1)$. Then,
        \begin{equation*}
            \begin{split}
                \sum_{j' = 1}^{j+1} v_{j'}(a'(j')) > \sum_{j' = 1}^{j+1} v_{j'}(a(j')) &\implies \sum_{j' = 1}^{j+1} v_{j'}(a'(j')) - \sum_{j' = 1}^j v_{j'}(a(j')) > \sum_{j' = 1}^{j+1} v_{j'}(a(j')) - \sum_{j' = 1}^j v_{j'}(a(j')) \\
                &\implies v_{j+1}(a'(j+1)) > v_{j+1}(a(j+1))
            \end{split}
        \end{equation*}
        so agent $j+1$ choosing $a(j+1)$ over $a'(j+1)$ could not have been played in equilibrium.
        \item Suppose under allocation $a$, agent $j+1$ chooses item $i$ and trades with agent $j_1$ who has item $i_1$. Next, suppose under allocation $a'$, the difference is agent $j+1$ chooses item $i'$ and trades with agent $j_2$ who has item $i_2$. We will show that if allocation $a$ maximizes total welfare for agents up to $j+1$, then agent $j+1$ prefers allocation $a$ (and its associated transfers) to allocation $a'$ (and its associated transfers). By the feasibility assumption, allocation $a$ is achievable by agent $i$.

        Since $a$ maximizes total utility and $a' \neq a$, we have that
        $$\sum_{j' = 1}^j v_{j'}(a(j')) > \sum_{j' = 1}^j v_{j'}(a'(j')).$$
        As all agents other than agents $j_1,j_2,j+1$ are allocated the same item under $a$ and $a'$, it must be that
        $$\sum_{j' = j_1,j_2,j+1} v_{j'}(a(j')) > \sum_{j' = j_1,j_2,j+1} v_{j'}(a'(j'))$$
        since total utility for agents other than these three is constant. By construction, this means that 
        $$v_{j_1}(i) + v_{j_2}(i_2) + v_{j+1}(i_1) > v_{j_1}(i_1) + v_{j_2}(i') + v_{j+1}(i_2)$$
        which then yields
        $$v_{j+1}(i_1) - v_{j_1}(i_1) + v_{j_1}(i) > v_{j+1}(i_2) - v_{j_2}(i_2) + v_{j_2}(i')$$
        after rearranging terms. Note that agent $j+1$ needs to pay agent $j_1$ at least $t_1 \geq v_{j_1}(i_1) - v_{j_1}(i)$ to make $j_1$ willing to trade (getting to allocation $a$) while agent $j+1$ needs to pay agent $j_2$ at least $t_2 \geq v_{j_2}(i_2) - v_{j_2}(i')$ to make $j_2$ willing to trade (getting to allocation $a'$). As such, agent $j+1$'s utility is upper bounded by $v_{j+1}(i_1) - v_{j_1}(i_1) + v_{j_1}(i)$ if allocation $a$ is achieved, which is greater than the upper bound of $v_{j+1}(i_2) - v_{j_2}(i_2) + v_{j_2}(i')$ if allocation $a'$ is reached. As such, agent $j+1$ would prefer to choose the item that reaches the allocation that maximizes total welfare over any other item; once again, for any fixed way to split gains from trade, the preference is strict.
    \end{enumerate}
    Thus, the inductive step holds in either case and we get the desired result.
\end{proof}

\newpage

\noindent \textbf{PROOF OF THEOREM \ref{compstat}}
\begin{proof}
    First, rewrite $\pi_2(t)$ as
    \begin{equation*}
        \begin{split}
            \pi_2(t) &= (v_2(A)-t) \cdot \left[1 - \int_{\underline{V}}^{\overline{V}} F_1^B(s - t)dF_1^A(s)\right] + v_2(B) \cdot \int_{\underline{V}}^{\overline{V}} F_1^B(s - t)dF_1^A(s)\\
            &= (v_2(A)-t) - (v_2(A)-t) \cdot \int_{\underline{V}}^{\overline{V}} F_1^B(s - t)dF_1^A(s) + v_2(B) \cdot \int_{\underline{V}}^{\overline{V}} F_1^B(s - t)dF_1^A(s)\\
            &= (v_2(A)-t) + \left[ v_2(B) - (v_2(A)-t)\right]\cdot \int_{\underline{V}}^{\overline{V}} F_1^B(s - t)dF_1^A(s). 
        \end{split}
    \end{equation*}

Notice that the leading $v_2(A)$ does not depend on $t$, so that leading term can be removed without affecting $2$'s optimization problem. Next, let $d = v_2(B) - v_2(A)$ be the difference in $2$'s valuations of item $A$ and $B$. Define the parameterized objective function $\pi_2'(t;d)$ by
$$\pi_2'(t;d) =  (d+t) \cdot \int_{\underline{V}}^{\overline{V}} F_1^B(s - t)dF_1^A(s) - t.$$
This is what player $2$ chooses $t$ to maximize taking in $d$ as a parameter. 

Next, suppose $t' > t$ and $d' > d$. We then have
\begin{equation*}
    \begin{split}
        \pi_2'(t';d') - \pi_2'(t;d') =& \left[(d'+t') \cdot \int_{\underline{V}}^{\overline{V}} F_1^B(s - t')dF_1^A(s) - t'\right]-\left[(d'+t) \cdot \int_{\underline{V}}^{\overline{V}} F_1^B(s - t)dF_1^A(s) - t\right]\\
        =& d'\left[ \int_{\underline{V}}^{\overline{V}} F_1^B(s - t')dF_1^A(s) - \int_{\underline{V}}^{\overline{V}} F_1^B(s - t)dF_1^A(s)\right] \\
        & + t' \int_{\underline{V}}^{\overline{V}} F_1^B(s - t')dF_1^A(s) - t\int_{\underline{V}}^{\overline{V}} F_1^B(s - t)dF_1^A(s) - t'+t' \\
        <& d\left[ \int_{\underline{V}}^{\overline{V}} F_1^B(s - t')dF_1^A(s) - \int_{\underline{V}}^{\overline{V}} F_1^B(s - t)dF_1^A(s)\right] \\
        & + t' \int_{\underline{V}}^{\overline{V}} F_1^B(s - t')dF_1^A(s) - t\int_{\underline{V}}^{\overline{V}} F_1^B(s - t)dF_1^A(s) - t'+t'\\
        =& \left[(d+t') \cdot \int_{\underline{V}}^{\overline{V}} F_1^B(s - t')dF_1^A(s) - t'\right]-\left[(d'+t) \cdot \int_{\underline{V}}^{\overline{V}} F_1^B(s - t)dF_1^A(s) - t\right]\\
        &= \pi_2'(t';d) - \pi_2'(t;d) 
    \end{split}
\end{equation*}
where the inequality holds since $d' > d$ by assumption and $t' > t$ implies $F_1^B(s - t') < F_1^B(s - t)$ for all $s$ by properties of a cumulative density function. As such, $\pi_2'$ exhibits strict decreasing differences so by the Strict Monotonicity Theorem (Paul Milgrom's lecture slides; see \cite{milgrom_1994_monotone} for a more general version of this result) we get that the optimal choice of $t$ to maximize $\pi'_2(t;d)$ is weakly decreasing as $d$ increases. However, $d = v_2(B) - v_2(A) = -(v_2(A) - v_2(B))$ so as $v_2(A) - v_2(B)$ grows, $d$ decreases and hence the optimal choice of $t$ is weakly increasing.
\end{proof}

\end{document}